\def\ps@pprintTitle{%
   \let\@oddhead\@empty
   \let\@evenhead\@empty
   \def\@oddfoot{\reset@font\hfil\thepage\hfil}
   \let\@evenfoot\@oddfoot
}
\newtheorem{theorem}{Theorem}
\newtheorem{lemma}[theorem]{Lemma}
\newtheorem{proposition}[theorem]{Proposition}
\newtheorem{corollary}[theorem]{Corollary}
\newtheorem{definition}[theorem]{Definition}
\newtheorem{example}[theorem]{Example}
\newcommand{\rank}{{\mathrm{rank}}}
\newcommand{\lcm}{{\mathrm{lcm}}}
\newcommand{\tr}{{\mathrm{Tr}}}
\newcommand{\gf}{{\mathrm{GF}}}
\newcommand{\PG}{{\mathrm{PG}}}
\newcommand{\GAut}{{\mathrm{Aut}}}
\newcommand{\GL}{{\mathrm{GL}}}
\newcommand{\wt}{{\mathtt{wt}}}
\newcommand{\m}{\mathbb{M}}
\newcommand{\cP}{{\mathcal{P}}}
\newcommand{\cB}{{\mathcal{B}}}
\newcommand{\C}{{\mathcal{C}}}
\newcommand{\bc}{{\mathbf{c}}}
\newcommand{\bzero}{{\mathbf{0}}}
\newcommand{\bD}{{\mathbb{D}}}
\newcommand{\PGL}{{\mathrm{PGL}}}
\begin{document}

\begin{frontmatter}




\title{On Infinite Families of Narrow-Sense Antiprimitive BCH Codes Admitting $3$-Transitive Automorphism Groups and their Consequences}

\tnotetext[fn1]{
C. Ding's research was supported by the Hong Kong Research Grants Council,
Proj. No. 16302121.  C. Tang's research was supported by The National Natural Science Foundation of China (Grant No.
11871058).
}
\author[qliu]{Qi Liu}
\ead{liuqijichushuxue@163.com}

\author[cding]{Cunsheng Ding}
\ead{cding@ust.hk}

\author[sihem]{Sihem Mesnager}
\ead{smesnager@univ-paris8.fr}

\author[cmt]{Chunming Tang}
\ead{tangchunmingmath@163.com}

\author[vdt]{Vladimir D. Tonchev}
\ead{tonchev@mtu.edu}

\address[qliu]{School of Mathematics and Information, China West Normal University, Nanchong, Sichuan,  637002, China}

\address[cding]{Department of Computer Science and Engineering, The Hong Kong University of Science and Technology, Clear Water Bay, Kowloon, Hong Kong, China}
\address[sihem]{Department of Mathematics, University of Paris VIII, 93526 Saint-Denis,  University
Sorbonne Paris Cit\'e, \\
Laboratory Analysis,  Geometry and Applications (LAGA),
UMR 7539, CNRS,  93430 Villetaneuse,  and Telecom Paris, Polytechnic Institute of Paris, 91120 Palaiseau,  France}

\address[cmt]{School of Mathematics and Information, China West Normal University,  Nanchong, Sichuan,  637002, China,  and also
Department of Computer Science and Engineering, The Hong Kong University of Science and Technology, Clear Water Bay,  Kowloon, Hong Kong, China}

\address[vdt]{Department of Mathematical Sciences, Michigan Technological University, Houghton, MI 49931, USA}




\begin{abstract}

The Bose-Chaudhuri-Hocquenghem (BCH) codes are a well-studied subclass of cyclic codes that have found numerous applications in error correction and notably in quantum information processing. They are widely used in data storage and communication systems.
A subclass of attractive BCH codes is the narrow-sense BCH codes over the Galois field $\gf(q)$ with length $q+1$, which are closely related to the action of the projective general linear group of degree two on the projective line. Despite its interest, not much is known about this class of BCH codes.
 This paper aims to study some of the codes within this class and specifically narrow-sense antiprimitive BCH codes (these codes are also linear complementary duals  (LCD)  codes that have interesting practical recent applications in cryptography, among other benefits).   We shall use tools and combine arguments from algebraic coding theory, combinatorial designs, and group theory (group actions, representation theory of finite groups, etc.) to investigate narrow-sense antiprimitive BCH Codes and extend results from the recent literature. Notably, the dimension, the minimum distance of some $q$-ary BCH codes with length $q+1$, and their duals are determined in this paper. The dual codes of the narrow-sense antiprimitive BCH codes derived in this paper include almost MDS codes. Furthermore, the classification of $\PGL(2, p^m)$-invariant codes over $\gf(p^h)$ is completed.
 As an application of this result, the $p$-ranks of all  incidence structures invariant under the projective general linear group $\PGL(2, p^m)$ are determined.
 Furthermore, infinite families of narrow-sense BCH codes admitting a $3$-transitive automorphism group are obtained.
 Via these BCH codes, a coding-theory approach to constructing the Witt spherical geometry designs is presented.
 The BCH codes proposed in this paper are good candidates for permutation decoding, as they have a relatively large group of automorphisms.
\end{abstract}

\begin{keyword}
Cyclic code  \sep linear  code  \sep BCH code \sep $t$-design
\sep  projective general linear group  \sep automorphism group.

\MSC  05B05 \sep 51E10 \sep 94B15

\end{keyword}

\end{frontmatter}


\section{Introduction}

An \emph{$[n,k]_{q}$ linear code} $\mathcal C$ is a $k$-dimensional vector subspace of $\gf(q)^{n}$, where $q$ is a prime power.
If the linear code $\C$ has minimum distance $d$, it is also called an $[n, k, d]_q$ code.
The \emph{dual code} $\mathcal C^{\perp}$ of a linear code $\mathcal C$ is the set of vectors orthogonal to all codewords of $\mathcal C$, i.e.,
\[ \mathcal C^{\perp} = \{\mathbf w \in \gf(q)^{n} :  \langle \mathbf c, \mathbf w \rangle =0 \text{ for all } \mathbf c \in \mathcal C\}, \]
where $ \langle \mathbf c, \mathbf w \rangle$ is the usual Euclidean inner product of $\mathbf c$ and $ \mathbf w $.
A cyclic code $\C$ of length $n$ over $\gf(q)$ is a linear subspace of $\gf(q)^n$ such that  $(c_0, c_1, \ldots, c_{n-1}) \in \C$ implies that $(c_{n-1}, c_0, \ldots, c_{n-2}) \in \C$. By definition, a cyclic code is a special linear code.  Cyclic codes are widely employed in communication systems, storage devices and consumer electronics, as they have efficient encoding and decoding algorithms. If we identify an $n$-tupe $\bc=(c_0,  \dots,  c_{n-1}) \in \gf(q)^n$ with the polynomial $\bc(x)=\sum_{i=0}^{n-1} c_i x^i$ in
 the residue class ring $\gf(q)[x]/\langle x^n-1 \rangle$, any cyclic code $\C$ of length $n$ over $\gf(q)$ is an ideal in
$\gf(q)[x]/\langle x^n-1 \rangle$. Since the ring $\gf(q)[x]/\langle x^n-1 \rangle$ is principal, there is an unique monic divisor
$g(x)$ of $x^n-1$ of the smallest degree such that $\C=\langle g(x) \rangle$.  This polynomial $g(x)$ is called the \emph{generator polynomial} of $\C$ and
$h(x):=(x^n-1)/g(x)$ is referred to as the \emph{check polynomial} of $\C$. It is easily seen that the reciprocal of $h(x)$
is the generator polynomial of the dual code $\C^\perp$.

Let $n$ and $q$ be coprime.
Let $\beta$ be a primitive $n$th root of unity in an extension field of $\gf(q)$.  The Bose-Chaudhuri-Hocquenghem (BCH) code $\C_{(q, n, \delta, h)}$ with designed
distance $\delta$ consists of the set of all $\bc(x) \in \gf(q)[x]/\langle x^n-1 \rangle$ such that $\bc (\beta^{h+i})=0$
 for all $i$ in the range $0\le i \le \delta-2$,  where $h$ is an arbitrary integer and $\delta$ is a positive integer with  $2 \leq \delta \leq n$.  By definition, the BCH code $\C_{(q, n, \delta, h)}$ has generator polynomial
 $$
 \lcm\{\m_{\beta^h}(x), \m_{\beta^{h+1}}(x), \ldots,   \m_{\beta^{h+\delta-2}}(x)\},
 $$
 where $\m_{\beta^j}(x)$ denotes the minimal polynomial of $\beta^j$ over $\gf(q)$, and $\lcm$ denotes the least common
 multiple of a set of polynomials.

 It follows from the BCH bound on cyclic codes and the definition of BCH codes that $\delta$ is a lower bound on the
 minimum distance of $\C_{(q, n, \delta, h)}$ \cite{Pet61}.  When $h = 1$,  the corresponding BCH code $\C_{(q, n, \delta, h)}$ is said to be narrow-sense.
 If $n=q^m- 1$,  then $\C_{(q,n, \delta, h)}$ is referred to as a primitive
BCH code.  If $n=q^m+1$,  then $\C_{(q,n, \delta, h)}$ is called an antiprimitive BCH code.

  The discovery of BCH codes by Bose and Ray-Chaudhuri \cite{BC60} and independently Hocquenghem \cite{H59}
has been an outstanding success in the construction of codes based on algebraic structures. An attractive feature of BCH codes is that one can infer valuable information on their minimum distances and dimensions from their design parameters $q$, $n$, $\delta$, and $h$. However,
determining the actual minimum distance of most BCH codes is a challenging problem (see \citep{Charpin98}).
For BCH codes $\C_{(q, n, \delta, h)}$, only the lower bound $\delta$ on their minimum distances is known, and the actual minimum distance is known only in special cases \cite{Li17,Noguchi21,Yan18}. In particular, we have very limited
knowledge about narrow-sense antiprimitive BCH codes \cite{LL19}.

In this paper, we will consider the narrow-sense antiprimitive $q$-ary BCH codes $\C_{(q, q+1, \delta, 1)}$ of length $q+1$
with designed distance $\delta$. Some of these codes have a relatively large automorphism group. Such codes are of interest from various points of view and have certain advantages since the number of computations needed for encoding and decoding can be considerably reduced when the automorphism group is sufficiently large \cite{Macw64}.
Employing the action and representation of the finite linear group of degree two,
we will show that the supports of the codewords of minimum weight in the BCH code $\C_{(\delta^m, \delta^m+1, \delta, 1)}$
hold the Witt spherical geometry design $S(3, \delta+1,  \delta^m+1)$, which answers the question whether
there is a coding-theory construction for the Witt spherical geometry designs in the affirmative.  A short description of the spherical geometry designs given by Wiitt can be found in \cite{Witt}.
We will also present a complete classification of $\PGL(2, p^m)$-invariant $p^h$-ary codes and
derive the $p$-ranks of $\PGL(2, p^m)$-invariant combinatorial designs.
The codes treated in this paper are linear complementary dual (LCD) codes, which are important in coding theory for theoretical \cite{CMYQ18} and practical reasons \cite{CG14} (especially, as discovered in cryptography, against side-channel attacks and fault injection attacks). This paper also generalizes and extends the results in \cite{DTT21}.

This article  is organized as follows.  Sec. \ref{sec:prel} introduces fundamental notions of algebraic coding theory, combinatorial designs, and
group actions. Sec. \ref{sec:action} considers actions and representations of the finite linear group of degree two.
Sec. \ref{sec:invariant} completes the classification of $\PGL(2, p^m)$-invariant codes over $\gf(p^h)$
and gives the $p$-rank of $\PGL(2, p^m)$-invariant $t$-designs.
Sec. \ref{sec:BCH} investigates the parameters and automorphisms of the BCH codes studied in this paper
and presents a coding-theory construction of the Witt spherical geometry designs. Finally, Sec. \ref{sec:concl} concludes this paper and explains an important motivation of constructing a linear code supporting a known $t$-design.

\section{Preliminaries}\label{sec:prel}
Throughout this paper,  $p$ is a prime and $\gf (q)$  is the finite field of order $q$, where $q=p^m$ for some positive integer $m$. The set of non-zero elements of $\gf (q)$ is denoted by $\gf (q)^*$. The main goal of this paper is to push further the investigation about the narrow-sense BCH codes for which the information on them is still thin. To achieve the objective of this paper, we need to introduce basic notions of algebraic coding theory, combinatorial designs, and
group actions in this section. For additional background on these subjects, the reader is referred to \cite{CRC,Dingbook18,HP03,Pet61}.

\subsection{Linear codes and combinatorial $t$-designs}
An $[n, k, d]$ linear code $\mathcal C$ over $\gf (q)$ is a linear subspace of
$\gf (q)^n$ with dimension $k$ and minimum (Hamming) distance $d$. An $[n,k,n-k+1]$ linear code is called a maximum distance separable (MDS) code. An $[n,k,n-k]$ linear code is said to be almost maximum distance separable (almost MDS, for short).
Given a linear code $\mathcal C$ of length $n$ over $\gf (q)$, its (Euclidean) dual code  is denoted by $\mathcal C^ {\perp}$. The code $\mathcal C^ {\perp}$  is  defined by
\begin{align*}
\mathcal C^ {\perp}=\{ (b_0, b_1, \ldots, b_{n-1})\in \gf (q)^n: & \langle \mathbf c, \mathbf b \rangle:= \sum _{i=0}^{n-1} c_i  b_i=0, &\forall (c_0, c_1, \ldots, c_{n-1}) \in \mathcal C \}.
\end{align*}
A linear  complementary dual  code (abbreviated LCD) is defined as a linear code $\mathcal C$ whose dual code $\mathcal C ^ \perp$ satisfies $\mathcal C \cap \mathcal C^ \perp=\{\mathbf{0}\}$.
Let $v$ be a positive integer, $\mathbf{a}=(a_0, \ldots, a_{v-1}) \in \left ( \gf (q)^* \right )^v$ and $\C$ be a $[v,k]_{q}$ linear code.
Let $\mathbf{a} \cdot \mathcal C$ denote the linear code
$\left \{ (a_0 c_0, \ldots, a_{v-1}c_{v-1}): ( c_0, \ldots, c_{v-1}) \in \mathcal C \right \}.$
It is a simple matter to check that
\begin{eqnarray}\label{eq:a-code-dual}
\left ( \mathbf{a} \cdot \mathcal C \right )^{\perp} =  \mathbf{a}^{-1} \cdot  \mathcal C ^{\perp},
\end{eqnarray}
where $\mathbf{a}^{-1}=(a_0^{-1}, \ldots, a_{v-1}^{-1})$.
Let $\C$ be a $[v,  k,  d]$ linear code over $\gf(q)$.  Let $\cP$ be  the set of coordinate positions of codewords of $\C$
and let $\mathrm{Sym}(\cP)$ be the symmetric group acting on $\cP$.  An
element $\bc$ of $\C$ could be written as $\bc=(c_x)_{x\in \cP}$.  The permutation group $ \mathrm{PAut}(\C)$ of $\C$ is the subgroup of
$\mathrm{Sym}(\cP)$ which leaves the code globally invariant.  More precisely,  it is the
subgroup of those $g$ satisfying
\begin{eqnarray*}
g(c_x)_{x\in \cP} = (c_{g^{-1}x})_{x \in \cP} \in \C  \text{ for all }  (c_x)_{x\in \cP} \in \C.
\end{eqnarray*}
The monomial automorphism group $\mathrm{MAut}(\C)$ of $\C$ is the subgroup of $(\gf(q)^*)^n \rtimes  \mathrm{Sym}(\cP)$ which leaves
the code globally invariant.  More precisely,  it is the
subgroup of those $\left((a_x)_{x\in \cP};  g\right)$ satisfying
\begin{eqnarray*}
\left((a_x)_{x\in \cP};  g\right) (c_x)_{x\in \cP}  = (a_{x} c_{g^{-1}x})_{x \in \cP} \in \C  \text{ for all }  (c_x)_{x\in \cP} \in \C.
\end{eqnarray*}
Let $\mathrm{Gal}(\gf(q))$ denote the Galois group of $\gf(q)$ over its prime field.  The automorphism group $\mathrm{Aut}(\C)$ of $\C$ is the subgroup of $(\gf(q)^*)^n \rtimes \left(  \mathrm{Sym}(\cP) \times \mathrm{Gal}(\gf(q))\right)$ which maps $\C$ onto itself,  More precisely,  it is the
subgroup of those $\left((a_x)_{x\in \cP};  g,  \gamma \right)$ satisfying
\begin{eqnarray*}
\left((a_x)_{x\in \cP};  g,  \gamma \right) (c_x)_{x\in \cP}  = (a_{x} \gamma(c_{g^{-1}x}))_{x \in \cP} \in \C  \text{ for all }  (c_x)_{x\in \cP} \in \C.
\end{eqnarray*}
We say that $\GAut(\C)$ is  \textit{$t$-homogeneous\index{$t$-homogeneous}}
(respectively, \textit{$t$-transitive\index{$t$-transitve}}) if for every
pair of $t$-element sets of coordinates (respectively, $t$-element ordered sets of coordinates),  there is an element $\left((a_x)_{x\in \cP};  g, \gamma\right)$ of
the automorphism group
$\GAut(\C)$ such that its permutation part $g$ sends the first set to the second set.

Let $\cP$ be a set of $v \ge 1$ elements, and let $\cB$ be a set of $k$-subsets of $\cP$, where $k$ is
a positive integer with $1 \leq k \leq v$. Let $t$ be a positive integer with $t \leq k$. The pair
$\bD = (\cP, \cB)$ is called a $t$-$(v, k, \lambda)$ {\em design\index{design}}, or simply {\em $t$-design\index{$t$-design}}, if every $t$-subset of $\cP$ is contained in exactly $\lambda$ elements of
$\cB$. The elements of $\cP$ are called points, and those of $\cB$ are referred to as blocks.
A $t$-$(v,k,\lambda)$ design is referred to as a {\em Steiner system\index{Steiner system}} if $t \geq 2$ and $\lambda=1$, and is denoted by $S(t,k, v)$.

The interplay between coding theory and $t$-designs started many years ago.  Let $\C$ be a $[v, \kappa, d]$ linear code over $\gf(q)$.  Let $A_i:=A_i(\C)$ denote the
number of codewords with Hamming weight $i$ in $\C$,  where $0 \leq i \leq v$.   For each $k$ with $A_k \neq 0$,  let $\cB_k(\C)$ denote
the set of the supports of all codewords with Hamming weight $k$ in $\C$,  where the coordinates of codewords
are indexed by $(p_1, \ldots, p_v)$.  Let $\cP=\{p_1, \ldots, p_v\}$.  The pair $(\cP, \cB_k(\C))$
may be a $t$-$(v, k, \lambda)$ design for some positive integer $\lambda$,  which is called a
design supported by the code,  or shortly the support design of the code,  and is denoted by $\bD_k(\C)$.  In such a case, we say that the code $\C$ holds or supports a $t$-$(v, k, \lambda)$
design.  If $\C$ has a $t$-homogeneous
or $t$-transitive automorphism group,  the codewords of any weight $i \geq t$ of $\C$ hold a $t$-design \cite[Theorem 4.30]{Dingbook18}.

The incidence matrix $A=(a_{i,j})$ of a $t$-design $\bD$ is a $(0,1)$-matrix with rows
indexed by the blocks,  and columns indexed by the points of $\bD$,  where $a_{i,j}=1$
if the $j$th point belongs to the $i$th block, and $a_{i,j}=0$ otherwise.
If $p$ is a prime,  the $p$-rank of $\bD$ (or $\rank_{p}\bD$) is defined as
the rank of its incidence matrix $A$ over  $\gf(p)$:
 $\rank_{p}\bD = \rank_{p}A$.
Equivalently,  the $p$-rank of a design is the dimension of the linear $p$-ary code
spanned by the rows of its $(0,1)$-incidence matrix.  The $p$-rank of incidence structures, i.e., the dimension of the corresponding codes,
can be used to classify incidence structures of certain types.
For example, the $2$-rank and $3$-rank of Steiner triple and quadruple
 systems were intensively studied and
employed for counting and classifying Steiner triple and quadruple systems
 \cite{JMTW},
 \cite{JT}, \cite{SXK}, \cite{Tonch01}, \cite{Tonch03},
 \cite{Z16}, \cite{ZZ12}, \cite{ZZ13}, \cite{ZZ13a}.

\subsection{Constructions of $t$-designs from group actions}
If $X$ is a set (usually,  some kind of geometric space),  the ``symmetries ``of $X$ are often captured by the action of a group $G$ on $X$.

\begin{definition}
Given a set $X$,  and a group $G$,  a left
action of $G$ on $X$ (for short,  an action of $G$ on $X$) is
a mapping $\varphi: G \times X \rightarrow X$,  such that
\begin{enumerate}[label=(\arabic*)]
\item For all $g,h \in G$ and all $x\in X$,
\begin{eqnarray*}
\varphi(g,  \varphi(h,x))=\varphi(gh,x);
\end{eqnarray*}
\item For all $x \in X$,
\begin{eqnarray*}
\varphi(1,x)=x,
\end{eqnarray*}
where $1 \in  G$ is the identity element of $G$.
\end{enumerate}
\end{definition}

We also call this data a (left) $G$-set $X$   or say that ``$G$ acts on X” (on the left).
To alleviate the notation,  we usually write $g (x)$ or even $gx$  for $\varphi(g,x)$.
Given an action $\varphi: G \times  X \rightarrow X$,  for every $g \in  G$,  we
have a permutation $\varphi_g$ over $X$ defined by
$\varphi_g(x) =g \cdot x$,  for all $x \in X$.
Then,  the map $g \mapsto \varphi_g$ is a group homomorphism from $G$
to the symmetric group $\mathrm{Sym}(X)$ of $X$.  With a slight
abuse of notation,  this group homomorphism $G \rightarrow \mathrm{Sym}(X)$
is also denoted $\varphi$.

Recall that a finite group $G$ acting on a set $X$ is said to be $t$-transitive if for every pair of ordered $k$-tuples of
distinct points $(x_1,  \dots,  x_t)$ and $(x'_1,  \dots,  x'_t)$ there exists an element $g\in G$ such
that $gx_i=x'_i$,  $1\le i \le t$.  In particular, a transitive group is $1$-transitive.
Let $\binom{X}{k}$ be the set of subsets of $X$ consisting of $k$ elements.
A group action of $G$ on $X$ induces an action of $G$ on the set $\binom{X}{k}$ for each $1\le k \le |X|$  and given by
$(g, B) \mapsto gB:=\lbrace gx: x \in B \rbrace$.
The group $G$ is said to act $t$-homogenously on $X$
 if $G$ acts transitively on $\binom{X}{t}$.

We recall a well-known general fact (see, e.g.  \cite[Proposition 4.6]{BJL}), that  for a $t$-homogeneous group $G$ on a finite set $X$ with
$ |X|=v$  and a subset $B$ of $X$ with $|B| =k >t$, the pair
$(X, \mathrm{Orb}_{B})$ is a
$t$-$(v, k , \lambda)$ design,
where $\mathrm{Orb}_{B}$ is the orbit of $B$ under
the action of $G$ on $\binom{X}{k}$,
$\lambda=\frac{\binom{k}{t} |G| }{\binom{v}{t} | \mathrm{Stab}_{B}|}$ and
$\mathrm{Stab}_{B}$ is the stabilizer of
$B$ for this action.   For some recent works on  $t$-designs from group actions,
we refer the reader to \cite{Tang,XLW}.

\subsection{$\PGL(2,q)$ and Witt spherical geometry designs}
The \emph{projective linear group $\PGL(2,q)$ of degree two} is defined as the group
of invertible $2\times 2$ matrices with entries in $\gf(q)$,
 modulo the scalar matrices
 $\begin{bmatrix}
a & 0\\
0 &  a
\end{bmatrix}$, where $a\in \gf(q)^*$.

Here the following convention for the action of $\PGL(2,q)$ on the projective line
$\mathrm{PG}(1,q)$
is used. A matrix
$\begin{bmatrix}
a & b\\
c &  d
\end{bmatrix}
\in \PGL(2,q)$ acts on $\mathrm{PG}(1,q)$ by
\begin{eqnarray}\label{eq:action-PGL(2,q)}
\begin{array}{c}
(x_0 : x_1) \mapsto \begin{bmatrix}
a & b\\
c &  d
\end{bmatrix} (x_0 : x_1) =   (a x_0 +b x_1 : c x_0 +d x_1),
\end{array}
\end{eqnarray}
or, via the usual identification of $\gf(q) \cup \{ \infty\}$ with  $\mathrm{PG}(1,q)$, by linear fractional transformation
\begin{eqnarray}\label{eq:action-infty}
\begin{array}{c}
 x \mapsto \frac{a x +b  }{c x +d},
\end{array}
\end{eqnarray}
with the usual conventions of defining $\frac{a(-d/c)+b}{c(-d/c)+d}  =\infty $ and $\frac{a\infty+b}{c\infty+d}= a/c$.

This is an action on the left, i.e., for $g_1, g_2 \in  \PGL(2,q)$
and $x \in  \PG(1,q)$ the following holds: $g_1 (g_2(x)) = (g_1 g_2)(x)$.
The action of $\PGL(2,q)$ on $\PG(1,q)$ defined in (\ref{eq:action-infty})
is sharply $3$-transitive, i.e.,
for any distinct $a, b, c \in \gf(q) \cup \{\infty\}$ there is $g \in \PGL(2,q)$
taking $\infty$ to $a$, $0$ to $b$,  and $1$ to $c$.
 In fact, $g$ is uniquely determined and it equals
\[ g= \begin{bmatrix}
 a(b-c) & b(c-a)\\
 b-c & c-a
 \end{bmatrix}.
\]
Thus, $\PGL(2,q)$ is in one-to-one correspondence with the set of ordered triples
$(a,b,c)$ of distinct elements in $\gf(q) \cup \{\infty\}$, and
\begin{eqnarray}\label{eq:cardinality of PGL(2,q)}
\begin{array}{c}
| \PGL(2,q)| = (q+1)q(q-1).
\end{array}
\end{eqnarray}

Put $B=\PG(1,q)$ and $\mathrm{Orb}_B=\{ gB: g \in \PGL(2,q^m)\}$.  Note that $B$  is a subset of $\PG(1,q^m)$.
Define $\bD=\left( B,  \mathrm{Orb}_B \right)$.  Since $\PGL(2, q^m)$ acts 3-transitively on $\PG(1,q^m)$,  $\bD$ is a
$3$-$(q^m+1, q+1, \lambda)$ design for some $\lambda$.  Since  $\PGL(2, q^m)$ is sharply
$3$-transitive on $\PG(1,q^m)$ and $\PGL(2, q)$ is sharply $3$-transitive on $B$,
$\PGL(2,  q)$ is the setwise stabiliser of $B$.  Consequently,  $\lambda=1$ and  $\bD=(\PG(1,q^m),  \mathrm{Orb}_B )$ is a Steiner system
$S(3,q+1, q^m+1)$.  These Steiner systems were constructed by Witt \cite{Witt},  and are called \emph{ Witt spherical geometry designs}.
A coding-theoretic construction of the Witt spherical geometry design $S(3,q+1, q^m+1)$ was given in  \cite{DingTang19} for $q=3$
and in \cite{DTT21, TD20} for $q=4$.
 Whether there exists an infinite family of linear codes
holding the Witt spherical geometry design $S(3,q+1, q^m+1)$ for $q \ge 5$ being a prime power has been an open problem.
This paper will settle this open problem by presenting
an infinite family of BCH codes  holding  the Witt spherical geometry design $S(3,q+1, q^m+1)$
for any prime power $q$ and positive integer $m \geq 2$.

\subsection{The cyclicity-defining sets and trace representations of  cyclic codes}

Given a linear code $\mathcal  C$ of length $n$ and dimension $k$ over $\gf(r)$, we define
 a linear code $  \gf(r^h) \otimes \mathcal C$ over $\gf(r^h)$ by
\begin{eqnarray}
 \gf(r^h) \otimes \mathcal C=\left \{\sum_{i=1}^{k} a_i \mathbf{c}_i: (a_1, a_2, \ldots, a_k) \in \gf(r^h)^k \right \},
\end{eqnarray}
where $\left \{ \mathbf{c}_1, \mathbf{c}_2, \ldots, \mathbf{c}_k \right \}$ is a basis of $\mathcal C$ over $\gf(r)$.
This code is independent of the choice of the basis $\left \{ \mathbf{c}_1, \mathbf{c}_2, \ldots, \mathbf{c}_k \right \}$
of $\C$, is called the \emph{lifted code} of $\mathcal C$ to $\gf(r^h)$. Clearly, $\gf(r^h) \otimes \mathcal C$
and $\C$ have the same length, dimension and minimum distance, but different weight distributions.
A trivial verification shows that if $(c_0, \ldots, c_{n-1}) \in \gf(r^h) \otimes \mathcal C $,
then  $(c_0^r, \ldots, c_{n-1}^r) \in \gf(r^h) \otimes \mathcal C $.

Let $n$ be a positive integer with $\gcd(n,r)=1$.
The \emph{order} $\mathrm{ord}_{n}(r)$ of $r$ modulo $n$ is the smallest positive integer $h$ such that $r^h \equiv  1 \pmod{n}$.
Let $\mathbb Z_n$ denote the ring of residue classes of integers modulo $n$.
The $r$-cyclotomic coset of $e\in \mathbb Z_n$ is the set $[e]_{(r,n)}= \{ r^ie \bmod{n}: 0 \le i \le \mathrm{ord}_n(r)-1\}$, where
$x \bmod{n}$ denotes the unique integer $\ell$ such that $0 \leq \ell \leq n-1$ and $x \equiv \ell \pmod{n}$.
Then any two $r$-cyclotomic cosets are either equal or disjoint.
A subset $E$ of $\mathbb Z_n$ is called $r$-invariant if
the set $\{re \bmod{n}: e \in E\}$ equals $E$, that is, $E$ is the union of some $r$-cyclotomic cosets.
A subset $\widetilde{E}=\{e_1, \ldots, e_t\}$ of an $r$-invariant set $E$ is called a complete set of representatives of $r$-cyclotomic cosets of $E$
if $[e_1]_{(r,n)}, \ldots, [e_t]_{(r,n)}$  are pairwise distinct and $E= \cup_{i=1}^{t} [e_i]_{(r,n)}$.

Let $\gamma$ be a primitive $n$-th root of unity in $\gf(r^h)$, where $h=\mathrm{ord}_{n}(r)$.
It is known \cite{HP03} that any $r$-ary cyclic code of length $n$ with $\gcd(n,r)=1$ has a simple description by means of the trace function.
The trace function $\tr_{q^r/q} : \gf(q^h) \rightarrow \mathbb \gf(q)$ is defined as:
\begin{displaymath}
\tr_{q^h/q}(x):=\sum_{i=0}^{ h-1}
  x^{q^{i}}=x+x^{q}+x^{q^2}+\cdots+x^{q^{h-1}}.
  \end{displaymath}
  The trace function from  $\mathbb {F}_{q^h}$ to its prime subfield is called the \emph{absolute trace} function.

\begin{theorem}\label{thm:cyclic-trance}  \cite{HP03}
Let $\mathcal C$ be an $[n,k]_r$ cyclic code with $\gcd(n,r)=1$ and $\gamma$ be a primitive
$n$-th root of unity in $\gf(r^h)$, where $h=\mathrm{ord}_{n}(r)$.
Then there exists a unique $r$-invariant set $E \subseteq \mathbb Z_n$ such that
\begin{eqnarray*}
\mathcal C= \left \{  \left (  \sum_{i=1}^{t} \tr_{r^{h_i}/r} \left ( a_i \gamma^{e_ij} \right ) \right )_{j=0}^{n-1}: a_i \in \gf\left (r^{h_i} \right )\right \},
\end{eqnarray*}
where $\{e_1, \ldots, e_t\}$ is any complete set of representatives of $r$-cyclotomic cosets of $E$ and $h_i=| [e_i]_{(r,n)}|$.
Moreover, $k=|E| = \sum_{i=1}^t h_i$.
\end{theorem}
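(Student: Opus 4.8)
The plan is to pass from the ideal description of \C to a spectral (Mattson--Solomon) description over $\gf(r^h)$ and then extract the trace form by Galois descent to $\gf(r)$. Since $\gcd(n,r)=1$, the polynomial $x^n-1$ has $n$ distinct roots $\gamma^0,\dots,\gamma^{n-1}$, so $\C=\langle g(x)\rangle$ is completely determined by its zero set $Z=\{j\in\Z_n:g(\gamma^j)=0\}$, and $\bc=(c_0,\dots,c_{n-1})\in\C$ if and only if $c(\gamma^j)=\sum_{\ell}c_\ell\gamma^{j\ell}=0$ for every $j\in Z$. Because $g(x)\in\gf(r)[x]$, the Frobenius map $\gamma^j\mapsto\gamma^{rj}$ permutes the roots of $g$, so $Z$ is $r$-invariant, and its complement has cardinality $n-|Z|=\deg h=\dim\C=k$.

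First I would introduce the spectral components $\hat c_e:=c(\gamma^e)$ for $e\in\Z_n$. As $p\nmid n$, the integer $n$ is invertible in $\gf(r^h)$, and the orthogonality relation $\sum_{e=0}^{n-1}\gamma^{e(\ell-\ell')}=n$ for $\ell\equiv\ell'$ and $0$ otherwise yields the inversion formula $c_\ell=n^{-1}\sum_{e}\hat c_e\gamma^{-e\ell}$. For $\bc\in\C$ only the indices outside $Z$ contribute; after the substitution $e\mapsto -e$ (which is what produces the positive exponent $\gamma^{e_ij}$ demanded by the statement) this becomes $c_\ell=\sum_{e\in E}B_e\gamma^{e\ell}$, where $B_e:=n^{-1}\hat c_{-e}$ and $E:=\Z_n\setminus(-Z)$ is again $r$-invariant with $|E|=k$.

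The crux is the descent step. Encoding ``$\bc$ has entries in $\gf(r)$'' as $c_\ell^{\,r}=c_\ell$ and applying the additive Frobenius of $\gf(r^h)/\gf(r)$ to $c_\ell=\sum_{e\in E}B_e\gamma^{e\ell}$ gives $c_\ell^{\,r}=\sum_{e\in E}B_e^{\,r}\gamma^{re\ell}$; comparing coefficients through the linear independence (a Vandermonde argument) of the characters $\ell\mapsto\gamma^{e\ell}$ for distinct $e$ forces $B_{re}=B_e^{\,r}$ for all $e\in E$. Hence on each $r$-cyclotomic coset $[e_i]_{(r,n)}$ of size $h_i$ the coefficients are the Frobenius conjugates of a single value $a_i:=B_{e_i}$, which satisfies $a_i^{\,r^{h_i}}=a_i$ and therefore lies in $\gf(r^{h_i})$; since $r^{h_i}e_i\equiv e_i\pmod n$ also forces $\gamma^{e_i}\in\gf(r^{h_i})$, the partial sum over $[e_i]_{(r,n)}$ collapses to $\sum_{s=0}^{h_i-1}\left(a_i\gamma^{e_ij}\right)^{r^s}=\tr_{r^{h_i}/r}\left(a_i\gamma^{e_ij}\right)$. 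Summing over the $t$ cosets gives precisely the asserted representation.

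Finally I would close the loop. Conversely, for any $a_i\in\gf(r^{h_i})$ the vector so defined has entries in $\gf(r)$ (values of the trace) and, by a direct spectral computation, has $\hat c_j=0$ for all $j\in Z$, hence lies in \C; the parametrization $\prod_{i}\gf(r^{h_i})\to\C$ is injective by the same Vandermonde independence, so it is a bijection and $r^{\sum_i h_i}=|\C|=r^{k}$, giving $k=|E|=\sum_i h_i$. Uniqueness of $E$ follows because $E=\Z_n\setminus(-Z)$ is intrinsic to \C (it is read off from the zero set of the generator polynomial, equivalently from the spectral support) and does not depend on the chosen coset representatives $e_i$. The main obstacle is exactly this descent step: matching the ``defined over $\gf(r)$'' condition to the conjugacy constraint $B_{re}=B_e^{\,r}$ and repackaging each coset sum as a trace with $a_i$ in the correct subfield $\gf(r^{h_i})$, while keeping the sign bookkeeping ($E$ versus $-E$) consistent so that the exponent in $\gamma^{e_ij}$ comes out positive as stated.
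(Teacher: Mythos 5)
Your proof is correct. The paper gives no argument of its own for this theorem --- it is quoted verbatim from Huffman and Pless \cite{HP03} --- and your write-up is essentially the standard proof found there: pass to the Mattson--Solomon (discrete Fourier) transform over $\gf(r^h)$, observe that the spectral support is the complement of the zero set $Z$ of the generator polynomial, impose the rationality condition $c_\ell^r=c_\ell$ to get the conjugacy constraint $B_{re}=B_e^r$, and collapse each $r$-cyclotomic coset sum into a trace $\tr_{r^{h_i}/r}(a_i\gamma^{e_ij})$ with $a_i\in\gf(r^{h_i})$; the sign bookkeeping $E=\mathbb{Z}_n\setminus(-Z)$, the counting $|\C|=r^{\sum h_i}$, and the uniqueness of $E$ are all handled correctly.
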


Theorem \ref{thm:cyclic-trance} states that there is a one-to-one correspondence between cyclic linear codes
over $\gf(r)$ with length $n$ and $r$-invariant subsets of
$\mathbb Z_n$ with respect to a fixed $n$-th root of
unity $\gamma$.
We will call the set $E$ in Theorem \ref{thm:cyclic-trance}  the \emph{cyclicity-defining set} of $\mathcal C$ with respect to $\gamma$.

The following corollary is an immediate consequence of Theorem \ref{thm:cyclic-trance}.

\begin{corollary}\label{cor:cyclic-extension} \cite{DTT21}
Let $n$ be a positive integer such that $\gcd(n,r)=1$.
 Let $\mathcal C$ be an $[n,k]_r$ cyclic code with cyclicity-defining set $E$ and $\gf(r^{\ell}) \otimes \mathcal C$ be the lifted code of $\C$ to $\gf(r^{\ell})$.
Then $\gf(r^{\ell}) \otimes \mathcal C$ is an $[n,k]_{r^{\ell}}$ cyclic code defined by the cyclicity-defining set $E$ of $\mathcal C$. In particular,
\begin{eqnarray*}
\gf(r^{h}) \otimes  \mathcal C= \left \{  \left (  \sum_{e\in E} a_e \gamma^{je} \right )_{j=0}^{n-1}: a_e \in \gf\left (r^{h} \right )\right \},
\end{eqnarray*}
where $h=\mathrm{ord}_n(r)$ and $\gamma$ is a primitive $n$-th root of unity in $\gf(r^h)$.
\end{corollary}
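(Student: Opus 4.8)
The plan is to deduce everything from Theorem~\ref{thm:cyclic-trance}, reducing the statement to a single computation and then comparing the two codes inside a large common field where the root of unity $\gamma$ lives. Two preliminaries come first. Cyclicity of $\gf(r^\ell)\otimes\C$ is immediate: if $\{\bc_1,\dots,\bc_k\}$ is a basis of $\C$ over $\gf(r)$, then each cyclic shift $\sigma(\bc_i)$ again lies in $\C\subseteq\gf(r^\ell)\otimes\C$, and since the shift is $\gf(r^\ell)$-linear, $\gf(r^\ell)\otimes\C$ is shift-invariant; together with the preservation of length and dimension under lifting (the lifted code is the $\gf(r^\ell)$-span of the same basis, which stays independent over the extension), this makes it an $[n,k]_{r^\ell}$ cyclic code. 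Moreover, since $E$ is $r$-invariant it is a union of $r$-cyclotomic cosets, whence $r^\ell E=E$ as well, so $E$ is $r^\ell$-invariant and is eligible to serve as a cyclicity-defining set over $\gf(r^\ell)$.

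The heart of the argument is the following claim, which I would prove for every exponent $N$ that is a multiple of $h=\ord_n(r)$ (so that $\gamma\in\gf(r^N)$):
\[ \gf(r^N)\otimes\C=\Big\{\big(\textstyle\sum_{e\in E}a_e\gamma^{je}\big)_{j=0}^{n-1}:a_e\in\gf(r^N)\Big\}=:\mathcal{D}_N. \]
For the inclusion $\C\subseteq\mathcal{D}_N$, I would take a codeword $c_j=\sum_{i=1}^t\tr_{r^{h_i}/r}(a_i\gamma^{e_ij})$ from Theorem~\ref{thm:cyclic-trance} and expand each trace as $\sum_{s=0}^{h_i-1}a_i^{r^s}\gamma^{e_ir^sj}$; as $i$ and $s$ vary, the exponents $e_ir^s\bmod n$ run exactly once over $E=\bigcup_i[e_i]_{(r,n)}$, and each coefficient $a_i^{r^s}$ lies in $\gf(r^{h_i})\subseteq\gf(r^N)$ because $h_i\mid h\mid N$. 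Thus $c_j=\sum_{e\in E}b_e\gamma^{je}\in\mathcal{D}_N$, and as $\mathcal{D}_N$ is a $\gf(r^N)$-subspace this gives $\gf(r^N)\otimes\C\subseteq\mathcal{D}_N$. Equality follows by a dimension count: $\mathcal{D}_N$ is spanned by the $|E|=k$ vectors $(\gamma^{je})_j$, so $\dim\mathcal{D}_N\le k=\dim(\gf(r^N)\otimes\C)$. Taking $N=h$ yields the ``in particular'' formula verbatim.

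It remains to identify the cyclicity-defining set in general. Applying Theorem~\ref{thm:cyclic-trance} to the $r^\ell$-ary cyclic code $\gf(r^\ell)\otimes\C$ (with the same $\gamma$, now viewed in $\gf(r^{\ell h'})$, where $h'=\ord_n(r^\ell)$) produces a unique $r^\ell$-invariant set $E'$; the goal is $E'=E$. I would take $N=\ell h'$, a common multiple of $h$ and $\ell$ with $\gamma\in\gf(r^N)$, and use that lifting is transitive, so $\gf(r^N)\otimes(\gf(r^\ell)\otimes\C)=\gf(r^N)\otimes\C$, both being the $\gf(r^N)$-span of the same basis $\{\bc_i\}$. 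The displayed claim, applied once to $\C$ with set $E$ and once to $\gf(r^\ell)\otimes\C$ with set $E'$, then shows $\mathcal{D}_N$ equals both $\{(\sum_{e\in E}a_e\gamma^{je})_j\}$ and $\{(\sum_{e\in E'}a_e\gamma^{je})_j\}$. Since the $n$ vectors $(\gamma^{je})_{j=0}^{n-1}$, $e\in\mathbb{Z}_n$, are $\gf(r^N)$-linearly independent (their coordinate matrix is a nonsingular Vandermonde matrix in the distinct points $\gamma^e$), two sub-families indexing the same subspace must coincide, forcing $E=E'$.

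The main obstacle is conceptual rather than computational: the $r$-cyclotomic and $r^\ell$-cyclotomic coset structures on $\mathbb{Z}_n$ genuinely differ, so the complete sets of representatives used in the two trace representations are not the same, and one cannot match the defining sets coset-by-coset. The device that circumvents this is to pass to a field containing $\gamma$ and compare the two codes as spans of the Vandermonde vectors $(\gamma^{je})_j$; linear independence then lets me read off $E=E'$ as raw subsets of $\mathbb{Z}_n$, free of any coset bookkeeping.
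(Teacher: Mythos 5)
Your proof is correct. The paper offers no argument of its own for this corollary---it is stated as ``an immediate consequence of Theorem~\ref{thm:cyclic-trance}'' with a citation to \cite{DTT21}---and your derivation (expanding each trace $\tr_{r^{h_i}/r}(a_i\gamma^{e_ij})=\sum_s a_i^{r^s}\gamma^{e_ir^sj}$ to place $\C$ inside the $\gf(r^N)$-span of the vectors $(\gamma^{je})_{j}$, $e\in E$, closing the gap by a dimension count, and then using the nonsingularity of the Vandermonde matrix in the distinct points $\gamma^e$ to force $E'=E$ for the lifted code) is exactly the standard way to make that ``immediate consequence'' precise; the divisibility facts you rely on, $h_i\mid h\mid N$ and $h\mid \ell h'$, both hold, so no step fails.
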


Hence,  the two code $\gf(r^{\ell}) \otimes \mathcal C$ and $\mathcal C$ have the same cyclicity-defining sets.
Let $n$ be a positive integer with $\gcd(n,r)=1$ and $h=\mathrm{ord}_n(r)$.
 Let $U_n$ be the cyclic multiplicative group of all $n$-th roots of unity in $\gf(r^h)$.
By polynomial interpolation, every function $f$ from $U_{n}$ to $\gf(r)$  has a unique \emph{univariate polynomial expansion} of the form
\[ f(u)= \sum_{i=0}^{n-1} a_i u^i,\]
where $a_j \in \gf(r^h)$, $u \in U_n$.  As a direct result of Theorem \ref{thm:cyclic-trance},
we have the following conclusion concerning cyclicity-defining sets of cyclic codes.

\begin{corollary}\label{cor:a_e-neq-0-E} \cite{DTT21}
Let $n$ be a positive integer with $\gcd(n,r)=1$, $h=\mathrm{ord}_n(r)$ and $\gamma$ a primitive $n$-th root of unity in $\gf(r^h)$.
Let $\mathcal C$ be an $[n,k]_r$ cyclic code with cyclicity-defining set $E$. Let $f(u)=\sum_{i=0}^{n-1} a_i u^i \in \gf(r^h) [u]$.
If $\left (f(\gamma^j) \right )_{j=0}^{n-1} \in \mathcal C$ and $a_i\neq 0$, then $i\in E$.
\end{corollary}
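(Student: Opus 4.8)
The plan is to read the statement as a uniqueness assertion for the univariate polynomial expansion and to reduce it directly to the explicit description of the lifted code in Corollary~\ref{cor:cyclic-extension}. The guiding idea is that the evaluation map sending a polynomial of degree $<n$ to its vector of values at the $n$ distinct points $\gamma^0,\gamma^1,\ldots,\gamma^{n-1}$ is a bijection (a discrete Fourier transform), so a codeword determines its defining polynomial unambiguously, and the support of that polynomial is constrained by the cyclicity-defining set.

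First I would note that $\mathcal C \subseteq \gf(r^h)\otimes\mathcal C$, since every element of $\mathcal C$ is trivially a $\gf(r^h)$-linear combination of a fixed $\gf(r)$-basis of $\mathcal C$. Hence the hypothesis $\left(f(\gamma^j)\right)_{j=0}^{n-1}\in\mathcal C$ yields $\left(f(\gamma^j)\right)_{j=0}^{n-1}\in\gf(r^h)\otimes\mathcal C$. Applying Corollary~\ref{cor:cyclic-extension} with $h=\mathrm{ord}_n(r)$, there exist coefficients $b_e\in\gf(r^h)$, indexed by $e\in E$, such that
\[
\left(f(\gamma^j)\right)_{j=0}^{n-1}=\left(\sum_{e\in E} b_e\,\gamma^{je}\right)_{j=0}^{n-1}.
\]
Setting $g(u)=\sum_{e\in E} b_e\,u^{e}\in\gf(r^h)[u]$, a polynomial of degree at most $n-1$ whose support is contained in $E$, this says precisely that $f(\gamma^j)=g(\gamma^j)$ for every $0\le j\le n-1$.

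Next I would invoke uniqueness. Because $\gamma$ is a primitive $n$-th root of unity, the powers $\gamma^0,\gamma^1,\ldots,\gamma^{n-1}$ are $n$ pairwise distinct elements of $U_n$; since $f-g$ has degree at most $n-1$ and vanishes at these $n$ distinct points, it must be the zero polynomial, so $f=g$ as elements of $\gf(r^h)[u]$. This is exactly the uniqueness of the univariate polynomial expansion recorded just before the corollary. Comparing coefficients gives $a_i=b_i$ for $i\in E$ and $a_i=0$ for $i\notin E$; consequently $a_i\neq 0$ forces $i\in E$, which is the desired conclusion. The argument involves no genuine obstacle: the only point requiring care is the justification that the value vector determines the polynomial, i.e.\ that no nonzero polynomial of degree $<n$ can vanish at $n$ distinct points, while the passage to the lifted code and the shape of its codewords are furnished verbatim by Corollary~\ref{cor:cyclic-extension}.
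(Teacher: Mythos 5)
Your proof is correct and follows essentially the route the paper intends: the paper presents this corollary as an immediate consequence of Theorem~\ref{thm:cyclic-trance} (via Corollary~\ref{cor:cyclic-extension} and the uniqueness of the univariate polynomial expansion noted just before the statement), which is exactly the reduction you carry out. The only step you make explicit that the paper leaves implicit is the interpolation-uniqueness argument that a polynomial of degree less than $n$ vanishing at the $n$ distinct points $\gamma^0,\ldots,\gamma^{n-1}$ must be zero, and that step is valid.
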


\section{Group actions and representations of $\GL(2,q)$ and $\PGL(2,q)$}\label{sec:action}

This section considers actions and representations of the projective general linear group $\PGL(2,q)$.
These results will play an important role in Sections \ref{sec:invariant} and \ref{sec:BCH}.

\subsection{Stabilizers of certain subsets of the projective line $\PG(1,  q^2)$}

Let $U_{q+1}$ be the subgroup of $\gf(q^2)$ consisting of elements whose norm to $\gf(q)$ is $1$.
By Hilbert Theorem 90,  we may describe the elements in $U_{q+1}$ in terms of the elements of the projective line $\PG(1, q)$ as
\begin{eqnarray*}\label{eq:Norm-PLine}
U_{q+1}=\left( \begin{array}{cc}
u_0 & 1\\
1 & u_0
\end{array} \right) \PG(1, q)=\left\lbrace \frac{u_0x+1}{x+u_0}: x \in \PG(1, q)  \right\rbrace,
\end{eqnarray*}
with $u_0 \in U_{q+1} \setminus \{\pm 1\}$ and the convention that the quotient is $u_0$ for $x = \infty$.
Recall that the setwise stabilizer of $\PG(1,q)$ under the action of $\PGL(2,  q^2)$   on $\PG(1,q^2)$
is $\PGL(2,  q)$.  Then the two setwise stabilizer groups $\mathrm{Stab}_{U_{q+1}}$ and $\mathrm{Stab}_{\PG(1,q)}= \PGL(2,q)$ are related by
\begin{eqnarray}\label{eq:stabilizers}
\mathrm{Stab}_{U_{q+1}}= \left( \begin{array}{cc}
u_0 & 1\\
1 & u_0
\end{array} \right) \PGL(2,  q)  \left( \begin{array}{cc}
u_0 & 1\\
1 & u_0
\end{array} \right)^{-1} .
\end{eqnarray}
Let $\left( \begin{array}{cc}
a & b\\
c & d
\end{array} \right) \in \PGL(2,  q)$.  A standard computation gives
\begin{eqnarray}\label{eq:conjugation}
 \left( \begin{array}{cc}
u_0 & 1\\
1 & u_0
\end{array} \right)
\left( \begin{array}{cc}
a & b\\
c & d
\end{array} \right)
 \left( \begin{array}{cc}
u_0 & 1\\
1 & u_0
\end{array} \right)^{-1}
= \frac{u_0'}{(u_0^2-1) }  \left( \begin{array}{cc}
\overline{d}^q & \overline{c}^q\\
\overline{c} & \overline{d}
\end{array} \right),
\end{eqnarray}
where $u_0'=\sqrt{-1} u_0$,  $\overline{c}= u_0'^{-1}\left(cu_0^2+(a-d)u_0 -b\right)$, $\overline{d}= u_0'^{-1}\left(du_0^2+(b-c)u_0 -a\right)$ and $\sqrt{1} \in \gf(q^2)$.
Then the following commutative diagram is obtained:
\begin{eqnarray}\label{eq:CD}
\begin{CD}
\gf(q) \cup \lbrace \infty  \rbrace @>\frac{u_0x+1}{x+u_0}>>   U_{q+1}\\
@V\frac{ax+b}{cx+d}VV     @VV \frac{\overline{d}^qu+\overline{c}^q}{\overline{c}u+\overline{d}}V\\
\gf(q) \cup \lbrace \infty  \rbrace @>\frac{u_0x+1}{x+u_0}>>   U_{q+1}
\end{CD}
\end{eqnarray}
Write $A=\left \{ \left( \begin{array}{cc}
d^q & c^q\\
c & d
\end{array} \right) : c, d \in \gf(q^2),  c^{q+1} \neq d^{q+1} \right\}$.
Combining (\ref{eq:stabilizers}) and (\ref{eq:conjugation}) yields
$\mathrm{Stab}_{U_{q+1}} \subseteq A$.  On the other hand,   it is clear that $A \subseteq \mathrm{Stab}_{U_{q+1}}$ by Hilbert Theorem 90.
We thus deduce that $\mathrm{Stab}_{U_{q+1}} = A$.  By the commutativity of the diagram in (\ref{eq:CD}),  we see that
 the action of $\mathrm{Stab}_{U_{q+1}}$ on $U_{q+1}$ is equivalent to the action of $\PGL(2,q)$ on $\PG(1,q)$.

A summary of the above discussion implies the following.

\begin{proposition}\label{prop:Stab-U}
The setwise stabilizer $\mathrm{Stab}_{U_{q+1}}$  of $U_{q+1}$ can be expressed as
\begin{eqnarray*}
\mathrm{Stab}_{U_{q+1}}=\left \{ \left( \begin{array}{cc}
d^q & c^q\\
c & d
\end{array} \right) \in \PGL(2,  q^2):  c^{q+1} \neq d^{q+1} \right\}.
\end{eqnarray*}
Furthermore,   the action of $\mathrm{Stab}_{U_{q+1}}$ on $U_{q+1}$ is equivalent to the action of $\PGL(2, q)$ on $\PG(1,  q)$.
Hence,  $\mathrm{Stab}_{U_{q+1}}$ is sharply $3$-transitive.
\end{proposition}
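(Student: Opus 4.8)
The plan is to obtain all three assertions from the single structural fact that $\mathrm{Stab}_{U_{q+1}}$ is a conjugate of $\PGL(2,q)$ inside $\PGL(2,q^2)$. Let $M=\left(\begin{smallmatrix}u_0&1\\1&u_0\end{smallmatrix}\right)$ be the matrix used above. By Hilbert Theorem~90 the linear fractional map attached to $M$ carries $\PG(1,q)$ bijectively onto $U_{q+1}$, so that $U_{q+1}=M\cdot\PG(1,q)$ as subsets of $\PG(1,q^2)$. Since the setwise stabilizer of $\PG(1,q)$ under $\PGL(2,q^2)$ is $\PGL(2,q)$, the setwise stabilizer of the translate $U_{q+1}$ is the conjugate subgroup $\mathrm{Stab}_{U_{q+1}}=M\,\PGL(2,q)\,M^{-1}$, which is exactly (\ref{eq:stabilizers}); in particular $|\mathrm{Stab}_{U_{q+1}}|=|\PGL(2,q)|=(q+1)q(q-1)$ by (\ref{eq:cardinality of PGL(2,q)}).

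To identify this conjugate with the set $A$ I would prove the two inclusions separately. The inclusion $\mathrm{Stab}_{U_{q+1}}\subseteq A$ can be read off from the explicit conjugation (\ref{eq:conjugation}): conjugating an arbitrary $\left(\begin{smallmatrix}a&b\\c&d\end{smallmatrix}\right)\in\PGL(2,q)$ by $M$ yields, up to the scalar $u_0'/(u_0^2-1)$, a matrix of the special shape $\left(\begin{smallmatrix}\overline{d}^q&\overline{c}^q\\\overline{c}&\overline{d}\end{smallmatrix}\right)$, hence an element of $A$. For the reverse inclusion $A\subseteq\mathrm{Stab}_{U_{q+1}}$ I would argue directly that the special form preserves the norm-one condition: for $u\in U_{q+1}$ one has $u^q=u^{-1}$ and $x^{q^2}=x$ on $\gf(q^2)$, so the image $w=\frac{d^q u+c^q}{cu+d}$ satisfies $w^q=\frac{cu+d}{d^q u+c^q}$, whence $w^{q+1}=1$ and $w\in U_{q+1}$. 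The two inclusions together give $\mathrm{Stab}_{U_{q+1}}=A$, the first assertion.

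The remaining two claims are transport-of-structure statements. The commutative diagram (\ref{eq:CD}) says precisely that the bijection $\phi\colon x\mapsto\frac{u_0 x+1}{x+u_0}$ from $\PG(1,q)$ onto $U_{q+1}$ intertwines the action of $\PGL(2,q)$ on $\PG(1,q)$ with the action of $\mathrm{Stab}_{U_{q+1}}$ on $U_{q+1}$ through the group isomorphism $g\mapsto MgM^{-1}$; that is, $\phi(g\cdot x)=(MgM^{-1})\cdot\phi(x)$ for all $g\in\PGL(2,q)$ and $x\in\PG(1,q)$. Thus $\phi$ together with $g\mapsto MgM^{-1}$ is an isomorphism of permutation groups, which is the second assertion. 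Since equivalence of actions preserves sharp $t$-transitivity, and $\PGL(2,q)$ acts sharply $3$-transitively on $\PG(1,q)$ (recalled around (\ref{eq:action-infty})), the third assertion follows at once.

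I expect the one genuine piece of labor to be the forward inclusion, namely carrying out the conjugation (\ref{eq:conjugation}) while keeping careful track of the normalizing scalar $u_0'=\sqrt{-1}\,u_0$ and of the Frobenius twists needed to recognize the output as a member of $A$; this bookkeeping is the main obstacle. If desired it can be avoided altogether: a direct enumeration shows $|A|=(q+1)q(q-1)$ (each $\PGL$-class is represented by exactly $q-1$ matrices of the special form, namely its $\gf(q)^*$-scalings), so the clean reverse inclusion $A\subseteq\mathrm{Stab}_{U_{q+1}}$ combined with $|A|=|\mathrm{Stab}_{U_{q+1}}|$ forces equality without any explicit conjugation.
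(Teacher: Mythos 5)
Your argument is correct and follows essentially the same route as the paper: the identification $U_{q+1}=M\cdot\PG(1,q)$ via Hilbert Theorem~90, the conjugation $\mathrm{Stab}_{U_{q+1}}=M\,\PGL(2,q)\,M^{-1}$ giving the forward inclusion into $A$, the direct norm computation $w^{q+1}=1$ giving the reverse inclusion, and the commutative diagram transporting sharp $3$-transitivity. Your optional counting argument ($|A|=(q+1)q(q-1)$, so the reverse inclusion alone forces equality) is a small but sound refinement that lets one skip the explicit conjugation bookkeeping.
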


\subsection{A modular representation  of $\GL(2,q)$}

Representation theory of finite groups is a potent tool for constructing linear codes invariant under a given group $G$.
Basic material related to the representation theory of groups can be found in \cite{LP10}.
\begin{definition}
A representation of a group $G$ is a pair $(V,  \rho)$ where $V$ is a vector space over a field $\mathbb F$
and $\rho$ is a map from $G\times V $ to $V$ such that
\begin{enumerate}[label=(\arabic*)]
\item $\rho$ is a group action (the action is associative and $\rho(1,v)=v$ for all $v \in V$ ),  and
 \item  the map $V \rightarrow V$  defined by $v \mapsto \rho(g,v)$ is linear
for all $g \in G$.
\end{enumerate}
\end{definition}

The dimension of $V$ over $\mathbb F$ is called the degree of the representation.
A representation is called an ordinary
representation if $\mathrm{char}(\mathbb F) \not | \,  |G|$,   and it is called a modular
representation if $\mathrm{char}(\mathbb F)  | \,  |G|$.

A representation is nothing but a linear action of $G$ on $V$.
We can also think of a group representation $(V, \rho)$ of $G$ as a group homomorphism from $G$ to $\mathrm{GL}(V)$,
where $\mathrm{GL}(V)$ denotes the group of invertible linear transformations
from $V$ to itself. If $V$ has a basis $v_1, \dots, v_n$ then we can identify $\GL(V )$ with the more familiar group $\GL_n$ of invertible
$n \times n$ matrices. Informally speaking, a representation of a group $G$ is a way of writing the group elements
as square matrices of the same size, which is multiplicative and assigns $1 \in G$ the identity
matrix.

Consider the set
\begin{eqnarray}
\begin{array}{c}
\mathcal{P}(\delta,q):= \left\lbrace  \tr_{q^2/q} \left(  \sum_{i=1}^{\delta -1}  a_i u^i\right) \in  \gf(q^2)[u]/\langle u^{q+1}-1 \rangle: a_i\in \gf(q^2)  \right\rbrace.
\end{array}
\end{eqnarray}

For $A=\left( \begin{array}{cc}
a & b\\
c & d
\end{array} \right)^{-1} \in \mathrm{GL} (2,q^2)$
and $f\in \mathcal{P}(\delta,q)$ we set
\begin{eqnarray}\label{eq:action-circ}
\begin{array}{c}
(A\circ f) (u):= (cu+d)^{(q+1)(\delta-1)} f\left(\frac{au+b}{cu+d}\right).
\end{array}
\end{eqnarray}

The following result on binomial coefficients will be used in evaluating $A\circ f$.
\begin{lemma}\label{eqn:q+e-1 mod}
Let $\delta$ be a power of a prime $p$ and let $e$ be a positive integer with $e \le \delta -1 $.  Let $s$ be an integer with $e \le s \le \delta-1$.  Then
the binomial coefficient $\binom{\delta-1+e}{s} $ is divisible by $p$.
\end{lemma}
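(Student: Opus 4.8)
The plan is to reduce the claim to a statement about base-$p$ digits and then apply Lucas' theorem. Write $\delta = p^a$ for some integer $a \ge 1$; the case $\delta = 1$ is vacuous, since then $e \le \delta - 1 = 0$ admits no positive $e$. Set $N := \delta - 1 + e$. The first step is to record the base-$p$ expansions of the quantities involved. Because $1 \le e \le \delta - 1 = p^a - 1$, we have $N = p^a + (e-1)$ with $0 \le e - 1 \le p^a - 2 < p^a$; hence no carry occurs, and the base-$p$ expansion of $N$ has digit $1$ in position $a$, the digits of $e - 1$ in positions $0, \dots, a-1$, and $0$ in all higher positions. Likewise $s \le \delta - 1 = p^a - 1 < p^a$, so $s$ has only zero digits in positions $\ge a$.

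Next I would invoke Lucas' theorem, which asserts that $\binom{N}{s} \not\equiv 0 \pmod p$ if and only if every base-$p$ digit of $s$ is at most the corresponding digit of $N$ (digitwise domination, which I write $s \preceq N$). By the digit description above, and since $s$ vanishes in all positions $\ge a$, the relation $s \preceq N$ is equivalent to $s \preceq e - 1$ in the lower $a$ positions, i.e. each base-$p$ digit of $s$ is at most the corresponding digit of $e - 1$.

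The final step is to rule this out. Digitwise domination implies the ordinary numerical inequality, so $s \preceq e - 1$ would force $s \le e - 1$. But the hypothesis $s \ge e$ gives $s \ge e > e - 1$, a contradiction. Hence $s \not\preceq N$, and Lucas' theorem yields $\binom{N}{s} \equiv 0 \pmod p$, which is exactly the assertion.

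No step is a genuine obstacle; the only point demanding care is the carry-free observation $N = p^a + (e-1)$ with $e - 1 < p^a$, which guarantees that the leading digit of $N$ lands cleanly in position $a$ without disturbing the digits of $e-1$ below it. As an alternative I could argue via Kummer's theorem: since $s$ and $N - s = p^a - 1 - (s - e)$ both lie in $[1, p^a - 1]$ while their sum $N$ is at least $p^a$, a no-carry addition in base $p$ would keep all digits in positions $\ge a$ equal to zero and so force the sum below $p^a$; this contradiction shows at least one carry occurs, and a single carry already gives $p \mid \binom{N}{s}$.
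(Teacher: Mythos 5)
Your proof is correct, but it takes a different route from the paper. The paper's own argument is purely generating-functional: it expands $(1+x)^{\delta-1+e}\equiv(1+x^{\delta})(1+x)^{e-1}\pmod p$ using the Frobenius identity $(1+x)^{\delta}\equiv 1+x^{\delta}$, and then reads off that the coefficient of $x^{s}$ for $e\le s\le\delta-1$ is $\binom{e-1}{s}=0$ because the factor $x^{\delta}$ only contributes in degrees $\ge\delta$ and $s>e-1$. You instead invoke Lucas' theorem (with Kummer's theorem as a backup), reducing the claim to the observation that digitwise domination $s\preceq e-1$ would force $s\le e-1$, contradicting $s\ge e$; your carry-free decomposition $N=p^{a}+(e-1)$ is handled correctly, as is the equivalence of $s\preceq N$ with $s\preceq e-1$ given that $s<p^{a}$. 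The two arguments are morally the same -- Lucas' theorem is standardly proved from exactly the congruence the paper writes down -- but the paper's version is self-contained and needs nothing beyond the binomial theorem in characteristic $p$, whereas yours imports a named theorem as a black box and in exchange makes the combinatorial reason for the vanishing (a digit of $s$ must exceed the corresponding digit of $N$) completely transparent. Either is acceptable; the paper's is slightly more economical in this context since the same polynomial identity is reused in the computation of Lemma~\ref{lem:action-linear}.
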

\begin{proof}
Expanding the power $(1+x)^{\delta-1+e} \in \gf(p)[x]$ by the Binomial Theorem,  we have
\begin{eqnarray*}
\begin{array}{c}
(1+x)^{\delta-1+e}=\sum_{i=0}^{\delta-1+e} \binom{\delta-1+e}{s} x^s.
\end{array}
\end{eqnarray*}
Since $\delta$ is a power of $p$,  we get
\begin{eqnarray*}
\begin{array}{rl}
(1+x)^{\delta-1+e}&\equiv (1+x)^{\delta} (1+x)^{e-1}\\
&\equiv (1+x^{\delta})(1+x)^{e-1} \pmod p.
\end{array}
\end{eqnarray*}
Then,  we have
\begin{eqnarray}
\begin{array}{c}
\sum_{i=0}^{\delta-1+e} \binom{\delta-1+e}{s} x^s\equiv (1+x^{\delta})(1+x)^{e-1} \pmod p.
\end{array}
\end{eqnarray}
By comparing the coefficients of $x^s$ ($e \le s \le \delta-1$) on both sides we see that $ \binom{\delta-1+e}{s}  \equiv 0 \pmod p$,
  which completes the proof of the lemma.
\end{proof}

The following lemma shows that for $A=\left( \begin{array}{cc}
d^q & c^q\\
c & d
\end{array} \right)^{-1} \in \mathrm{GL} (2,q^2)$ ,  $f \mapsto A\circ f$ defines a $\gf(q)$-linear transformation over $\mathcal{P}(\delta, q)$ in some cases.
\begin{lemma}\label{lem:action-linear}
Let $q=p^m$ and $\delta$ be a power of $p$.  Let $A=\left( \begin{array}{cc}
d^q & c^q\\
c & d
\end{array} \right)^{-1} \in \mathrm{GL} (2,q^2)$  and $f\in \mathcal{P}(\delta,q)$.  Then $A\circ f \in \mathcal{P}(\delta,q)$.
\end{lemma}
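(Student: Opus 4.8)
The plan is to exploit two facts: that the quantity $(cu+d)^{q+1}$ is a relative norm, hence lies in $\gf(q)$ for $u\in U_{q+1}$, and that the linear fractional map hidden in the definition of $\circ$ is exactly the norm-one transformation of Proposition~\ref{prop:Stab-U}. Write $g(u)=\sum_{i=1}^{\delta-1}a_iu^i$, so that $f=\tr_{q^2/q}(g)=g+g^q$ as a function on $U_{q+1}$, and set $v=\frac{d^qu+c^q}{cu+d}$ and $T_1:=(cu+d)^{(q+1)(\delta-1)}g(v)$. Since $(cu+d)^{q+1}=(cu+d)(c^qu^{-1}+d^q)\in\gf(q)$ on $U_{q+1}$, one has $\big[(cu+d)^{(q+1)(\delta-1)}\big]^q=(cu+d)^{(q+1)(\delta-1)}$, whence the conjugate term $(cu+d)^{(q+1)(\delta-1)}g(v)^q$ equals $T_1^q$. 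Therefore $A\circ f=T_1+T_1^q=\tr_{q^2/q}(T_1)$ as functions on $U_{q+1}$. Because $\tr_{q^2/q}(\alpha u^k)=\tr_{q^2/q}(\alpha^q u^{-k})$ lies in $\mathcal{P}(\delta,q)$ for every $k$ with $1\le|k|\le\delta-1$, it suffices to prove that the univariate (Laurent) expansion of $T_1$ on $U_{q+1}$ is supported on exponents $k$ with $1\le|k|\le\delta-1$; in particular $T_1$ must carry no constant term.

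First I would simplify $v$. Using $u^q=u^{-1}$ on $U_{q+1}$ gives $d^qu+c^q=u(cu+d)^q$, so $v=u(cu+d)^{q-1}$ and hence
\[
T_1=\sum_{i=1}^{\delta-1}a_i\,u^i\,(cu+d)^{E_i},\qquad E_i=(q+1)(\delta-1)+i(q-1).
\]
Writing $P:=(cu+d)^q=c^qu^{-1}+d^q$ on $U_{q+1}$ and using the splitting $E_i=(\delta-1-i)+q(\delta-1+i)$, each summand becomes $a_i\,u^i\,(cu+d)^{\delta-1-i}\,P^{\delta-1+i}$. Here $(cu+d)^{\delta-1-i}$ is an ordinary polynomial with exponents in $\{0,\dots,\delta-1-i\}$, while $P^{\delta-1+i}=\sum_{k=0}^{\delta-1+i}\binom{\delta-1+i}{k}c^{qk}d^{q(\delta-1+i-k)}u^{-k}$ has exponents in $\{0,-1,\dots,-(\delta-1+i)\}$, with coefficient $\binom{\delta-1+i}{k}$ attached to $u^{-k}$.

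The crucial step is Lemma~\ref{eqn:q+e-1 mod}, applied with $e=i$: since $\delta$ is a power of $p$, the coefficient $\binom{\delta-1+i}{k}$ vanishes modulo $p$ for every $k$ with $i\le k\le\delta-1$. This opens a gap in $P^{\delta-1+i}$, whose surviving exponents are therefore confined to $\{0,-1,\dots,-(i-1)\}\cup\{-\delta,\dots,-(\delta-1+i)\}$. Multiplying by $u^i(cu+d)^{\delta-1-i}$, whose exponents fill $\{i,\dots,\delta-1\}$, shifts the first block into $\{1,\dots,\delta-1\}$ and the second block into $\{-(\delta-1),\dots,-1\}$. Thus every term $a_iu^i(cu+d)^{E_i}$, and so $T_1$ itself, is supported on exponents $k$ with $1\le|k|\le\delta-1$: the exponent $0$ and the exponents of absolute value $\ge\delta$ are exactly those annihilated by the gap. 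Combined with the first paragraph this yields $A\circ f=\tr_{q^2/q}(T_1)\in\mathcal{P}(\delta,q)$.

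The main obstacle is this last step, namely showing that the constant term of $T_1$ vanishes and that no exponent of absolute value $\ge\delta$ survives. A brute-force expansion of $(cu+d)^{E_i}$ modulo $u^{q+1}-1$ spreads weight over all residues, and the required cancellations are invisible without the arithmetic input that $\delta$ is a power of $p$; it is precisely the $p$-divisibility furnished by Lemma~\ref{eqn:q+e-1 mod} that creates the gap killing the forbidden exponents. A secondary, purely bookkeeping, point is to rewrite $v$ as $u(cu+d)^{q-1}$ and to split $E_i$ as $(\delta-1-i)+q(\delta-1+i)$, so that the two factors $(cu+d)^{\delta-1-i}$ and $P^{\delta-1+i}$ carry exponents of opposite sign, together with the observation that in the relevant range of $\delta$ (small compared with $q$) these exponents are genuine residues modulo $q+1$, so that no wrap-around spoils the support count.
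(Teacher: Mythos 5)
Your argument is correct and is essentially the paper's own proof: both rewrite the prefactor via $u^{q+1}=1$ as $(cu+d)^{\delta-1}(c^qu^{-1}+d^q)^{\delta-1}$ (your splitting $E_i=(\delta-1-i)+q(\delta-1+i)$ is the same device), expand by the Binomial Theorem, and invoke Lemma~\ref{eqn:q+e-1 mod} to annihilate the obstruction; the only differences are presentational, namely that the paper reduces to monomials $\tr_{q^2/q}(\beta u^e)$ and kills only the constant term $W$, while you keep the whole sum, describe a full gap in the support, and make explicit the folding $\tr_{q^2/q}(\alpha u^{-k})=\tr_{q^2/q}(\alpha^q u^{k})$ that the paper leaves implicit. (One small inaccuracy in your closing discussion: exponents of absolute value $\ge\delta$ never arise even without the gap, since $j-k$ with $j\in\{i,\dots,\delta-1\}$ and $0\le k\le\delta-1+i$ already lies in $[-(\delta-1),\delta-1]$, so the lemma is genuinely needed only to kill the constant term --- exactly as in the paper.)
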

\begin{proof}
In order to prove this lemma it suffices to prove the following claim:
for any integer $e$ with $1 \le e \le \delta-1$ and $\beta \in \gf(q^2)$ ,
we have $A\circ g \in \mathcal{P}(\delta,q)$,  where $g=\tr_{q^2/q} \left( \beta u^e \right)$.
By the definition of the operator $'\circ'$, we have
\begin{eqnarray}\label{Eqn:A-g}
\begin{array}{l}
\left( A\circ g \right) (u) \\
= (c u+d)^{(q+1)(\delta-1)} \tr_{q^2/q}  \left( \beta \left(\frac{d^qu+c^q}{cu+d}\right)^e \right)\\
= (c^q u^{-1}+d^q)^{\delta-1} (c u+d)^{\delta-1} \tr_{q^2/q}  \left( \beta \left(\frac{d^qu+c^q}{cu+d}\right)^e \right)\\
=\tr_{q^2/q} \left(\beta  u^e \left( c u +d \right)^{\delta-1-e} \left( c^q u^{-1} +d^q \right)^{\delta-1+e}  \right).
\end{array}
\end{eqnarray}
Using the Binomial Theorem,  we get
\[ u^e \left( c u +d \right)^{\delta-1-e}=\sum_{i=0}^{\delta-1-e} \binom{\delta-1-e}{i}a_i u^{i+e} \]
 and
\[ \left( c^q u^{-1} +d^q \right)^{\delta-1+e} =\sum_{j=0}^{\delta-1+e} \binom{\delta-1 +e}{j} b_j u^{-j},\]
where $a_i=c^id^{\delta-1-e-i}$ and $b_j=c^{qj}d^{q(\delta-1+e-j)}$.
By (\ref{Eqn:A-g}) we have
\begin{eqnarray}\label{Eqn:A-g-1}
\begin{array}{l}
\left( A\circ g \right) (u) \\
=\tr_{q^2/q} \left( \beta  \sum \limits_{ \begin{array}{c}
0 \le i \le \delta-1-e \\ 0 \le j\le \delta-1+e\end{array}}\binom{\delta-1-e}{i} \binom{\delta-1+e}{j} a_i b_j u^{i-j+e} \right).
\end{array}
\end{eqnarray}
Note that $0 \le i \le \delta-1-e$ and $0 \le j\le \delta-1+e$ imply that $-(\delta-1)\le i-j+e \le \delta -1$.
Hence there are $c_t \in \gf(q^2)$ ($1 \le t \le \delta-1$) such that
\begin{eqnarray}
\begin{array}{l}
\left( A\circ g \right) (u)
=\tr_{q^2/q}\left(\beta \sum_{i=1}^{\delta-1}c_tu^i \right)+\tr_{q^2/q}(\beta W),
\end{array}
\end{eqnarray}
where $W=\sum_{t=0}^{\delta-1-e}  \binom{\delta-1-e}{t} \binom{\delta-1+e}{t+e} a_tb_{t+e}$.
By Lemma \ref{eqn:q+e-1 mod},  we deduce that $\binom{p-1+e}{t+e} \equiv 0 \pmod p$.
It follows that  $W=0$,
 which implies that $ A\circ g  \in  \mathcal{P}(\delta,  q)$.
  This completes the proof.
\end{proof}

We consider the lifting of $\mathrm{Stab}_{U_{q+1}}$ in $\GL(2,q^2)$
\begin{eqnarray}
\mathrm{\overline{Stab}}_{U_{q+1}}=
 \left\{ \left( \begin{array}{cc}
d^q & c^q\\
c & d
\end{array} \right): c, d \in \gf(q^2),  c^{q+1} \neq d^{q+1}\right\}.
\end{eqnarray}
The following lemma implies that the action $'\circ'$ defined in (\ref{eq:action-circ}) gives  a representation of $ \mathrm{\overline{Stab}}_{U_{q+1}}$ on
the linear space $\mathcal{P}(\delta, q)$. It comes  straightforwardly from Lemma \ref{lem:action-linear} and the definition of the action $'\circ'$.

\begin{lemma}\label{lem:representation}
Let $q=p^m$ and $\delta$ be a power of $p$.  Let $A_1, A_2 \in  \mathrm{\overline{Stab}}_{U_{q+1}}$ and $f_1,  f_2 \in \mathcal{P}(\delta, q)$,   and denote by $E$ the $2\times 2$ identity matrix.  Then the following
hold:
\begin{enumerate}[label=(\arabic*)]
 \item $A_1 \circ f_1 \in \mathcal{P}(\delta, q)$ ,
\item $E \circ f_1= f_1$ ,
\item $(A_1A_2) \circ f_1 = A_1 \circ  (A_2 \circ  f_1 )$,
\item  $A_1 \circ  ( af_1+bf_2)=aA_1 \circ  f_1 +b A_2\circ  f_2$ for all $a,b \in \gf(q)$.
\end{enumerate}
\end{lemma}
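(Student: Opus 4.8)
The plan is to read the four assertions as the single statement that $'\circ'$ is a well-defined $\gf(q)$-linear left action of $\overline{\mathrm{Stab}}_{U_{q+1}}$ on the space $\mathcal{P}(\delta,q)$, and to settle three of the four items by inspection, reserving the real work for the homomorphism property~(3). Claim~(1) is nothing but Lemma~\ref{lem:action-linear} applied to $A_1\in\overline{\mathrm{Stab}}_{U_{q+1}}$. Claim~(2) follows by substituting the identity matrix into~(\ref{eq:action-circ}): with $a=d=1$ and $b=c=0$ the automorphy factor $(cu+d)^{(q+1)(\delta-1)}$ equals $1$ and $\frac{au+b}{cu+d}=u$, so $(E\circ f_1)(u)=f_1(u)$. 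Claim~(4) is the $\gf(q)$-linearity of the map $f\mapsto A_1\circ f$ (the $A_2$ on its right-hand side being a misprint for $A_1$): the factor $(cu+d)^{(q+1)(\delta-1)}$ in~(\ref{eq:action-circ}) is independent of $f$, and the substitution $f\mapsto f\!\left(\frac{au+b}{cu+d}\right)$ followed by multiplication by that factor is plainly additive and commutes with scalars from $\gf(q)$; here one uses that $\mathcal{P}(\delta,q)$ consists of $\gf(q)$-valued functions on $U_{q+1}$, whence it is a $\gf(q)$-vector space and only $\gf(q)$-scalars are admissible.

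For claim~(3) I would recast the operator through automorphy factors. Throughout, an element of $\gf(q^2)[u]/\langle u^{q+1}-1\rangle$ is viewed as a function on $U_{q+1}$; by unique polynomial interpolation on the $q+1$ points of $U_{q+1}$, two such elements agree iff they agree as functions, so it suffices to verify~(3) pointwise. For $M=\left(\begin{smallmatrix} a&b\\ c&d\end{smallmatrix}\right)$ set $j(M,u)=cu+d$ and $\phi_M(u)=\frac{au+b}{cu+d}$, and define the auxiliary (forward) operator $(M\bullet f)(u)=j(M,u)^{(q+1)(\delta-1)}f(\phi_M(u))$. Since the matrix used in~(\ref{eq:action-circ}) is $A^{-1}$, the definition reads simply $A\circ f=A^{-1}\bullet f$.

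The one computation that matters is the cocycle identity
\[
 j(M_1M_2,u)=j\bigl(M_1,\phi_{M_2}(u)\bigr)\,j(M_2,u),
\]
which one reads off in a single line from the bottom row of the product $M_1M_2$. Raising it to the power $(q+1)(\delta-1)$ and combining it with $\phi_{M_1M_2}=\phi_{M_1}\circ\phi_{M_2}$ shows that $\bullet$ is an anti-homomorphism, $M_1\bullet(M_2\bullet f)=(M_2M_1)\bullet f$; the inverse built into the definition of $\circ$ then flips this into a genuine left action, namely
\[
 A_1\circ(A_2\circ f_1)=A_1^{-1}\bullet(A_2^{-1}\bullet f_1)=(A_2^{-1}A_1^{-1})\bullet f_1=(A_1A_2)^{-1}\bullet f_1=(A_1A_2)\circ f_1 .
\]

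The points I expect to be most error-prone are purely bookkeeping: getting the placement of the inverse right (this is exactly what turns the right action produced by the cocycle into the asserted left action) and staying inside $\mathcal{P}(\delta,q)$. For the latter it already suffices that the right-hand side $A_1\circ(A_2\circ f_1)$ lies in $\mathcal{P}(\delta,q)$ by two applications of Lemma~\ref{lem:action-linear}; moreover $\overline{\mathrm{Stab}}_{U_{q+1}}$ is closed under matrix multiplication, since the product of two matrices of the shape $\left(\begin{smallmatrix} d^q&c^q\\ c&d\end{smallmatrix}\right)$ again has this shape (because $x\mapsto x^q$ is an involution of $\gf(q^2)$), so $A_1A_2\in\overline{\mathrm{Stab}}_{U_{q+1}}$ and $(A_1A_2)\circ f_1$ is itself covered by Lemma~\ref{lem:action-linear} and Proposition~\ref{prop:Stab-U}. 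These four verifications together exhibit $'\circ'$ as a (modular) representation of $\overline{\mathrm{Stab}}_{U_{q+1}}$ on $\mathcal{P}(\delta,q)$.
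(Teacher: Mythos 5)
Your proposal is correct and follows essentially the same route as the paper, which offers no written argument beyond asserting that the lemma ``comes straightforwardly from Lemma~\ref{lem:action-linear} and the definition of the action $\circ$''; your items (1), (2) and (4) are exactly that, and your cocycle identity $j(M_1M_2,u)=j(M_1,\phi_{M_2}(u))\,j(M_2,u)$ supplies the one verification (item (3)) the paper leaves implicit, with the inverse in the definition correctly converting the resulting anti-homomorphism into a left action. You are also right that the $A_2$ in item (4) is a misprint for $A_1$.
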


By (\ref{eq:stabilizers}),  we have $\mathrm{\overline{Stab}}_{U_{q+1}}= \left( \begin{array}{cc}
u_0 & 1\\
1 & u_0
\end{array} \right) \GL(2,  q)  \left( \begin{array}{cc}
u_0 & 1\\
1 & u_0
\end{array} \right)^{-1}$.  Hence,  the representation of $\mathrm{\overline{Stab}}_{U_{q+1}}$  gives rise to a representation of $\GL(2,q)$ on $\cP(\delta,  q)$.

\section{$\PGL(2, q)$-invariant codes}\label{sec:invariant}

The main objective of this section is to classify all linear codes over $\gf(p^h)$ of length $p^m+1$ that are invariant under
$\PGL(2,  p^m)$.
As an immediate application,
 we derive
the $p$-rank of the incidence matrices of $t$-$(p^m+1, k, \lambda)$ designs that are invariant under $\PGL(2,p^m)$.

Let $\mathcal C$ be a $[p^m+1, k]_{p^h}$ linear code.
We can regard $U_{p^m+1}$ as the set of the coordinate positions of $\mathcal C$ and write the codeword of $\mathcal C$ as $\left ( c_u\right )_{u\in U_{p^m+1}}$.
Then the set of coordinate positions of $\mathcal C$ could be endowed with the
action of $\mathrm{Stab}_{U_{p^m+1}}$.  According to Proposition
\ref{prop:Stab-U},
we only need to find  all linear codes over $\gf(p^h)$ of length $p^m+1$ which are invariant under $\mathrm{Stab}_{U_{p^m+1}}$.

The following lemma gives the polynomial expansion of the linear fractional transformation
$\frac{u-c^q}{-cu+1}$, where $c \in \gf(q^2)^* \setminus U_{q+1}$.
\begin{lemma}\label{lem:frac-poly}
Let  $c \in \gf(q^2)^* \setminus U_{q+1}$.
Then for any $u\in U_{q+1}$,  the following holds:
\begin{eqnarray*}
\frac{u-c^q}{-cu+1} =\sum_{i=1}^{q} c^{i-1} u^i.
\end{eqnarray*}

\end{lemma}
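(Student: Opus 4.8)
The plan is to recognize the right-hand side as a truncated geometric series, evaluate it in closed form, and then invoke the defining property of $U_{q+1}$ to collapse it to the stated linear fractional transformation. Concretely, I would first factor out one power of $u$ and write
\[
\sum_{i=1}^{q} c^{i-1} u^i = u \sum_{j=0}^{q-1} (cu)^j.
\]
Before summing, I would observe that $cu \neq 1$: since $U_{q+1}$ is a multiplicative group, $cu = 1$ would force $c = u^{-1} \in U_{q+1}$, contradicting the hypothesis $c \in \gf(q^2)^* \setminus U_{q+1}$. This is precisely the point at which the hypothesis is used, and it simultaneously guarantees that the denominator $-cu+1$ of the left-hand side is nonzero, so both sides are well defined. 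With $cu \neq 1$ the geometric-series identity applies and gives $\sum_{j=0}^{q-1}(cu)^j = \big((cu)^q - 1\big)/(cu-1)$.

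The second key step is to use that every $u \in U_{q+1}$ satisfies $u^{q+1}=1$, equivalently $u^q = u^{-1}$. Hence $(cu)^q = c^q u^q = c^q u^{-1}$, and the numerator simplifies neatly:
\[
u\big((cu)^q - 1\big) = u\big(c^q u^{-1} - 1\big) = c^q - u.
\]
Combining the two displays yields $\sum_{i=1}^{q} c^{i-1} u^i = (c^q - u)/(cu - 1)$, and negating both numerator and denominator turns this into $(u - c^q)/(1 - cu) = (u-c^q)/(-cu+1)$, which is exactly the claimed expression.

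There is no genuine obstacle in this argument; it is a short computation whose only subtlety is the verification that $cu \neq 1$ (so that the geometric sum is valid and the transformation is defined), which is exactly where the condition $c \notin U_{q+1}$ enters. I would present the three manipulations above in sequence, flagging the group-theoretic reason for $cu \neq 1$ and the identity $u^q = u^{-1}$ on $U_{q+1}$ as the two facts that drive the simplification.
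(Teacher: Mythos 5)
Your proof is correct and follows essentially the same route as the paper: sum the geometric series $u\sum_{j=0}^{q-1}(cu)^j$ in closed form and use $u^{q+1}=1$ to simplify the numerator. Your explicit verification that $cu\neq 1$ (via $c\notin U_{q+1}$) is a small addition the paper leaves implicit, but the argument is otherwise identical.
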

\begin{proof}
An easy computation shows that
\begin{eqnarray*}
\begin{array}{rl}
\sum_{i=1}^{q} c^{i-1} u^i & = \frac{1-(cu)^{q}}{1-cu} u\\
 & = \frac{u-c^qu^{q+1}}{1-cu} \\
 &= \frac{u-c^q}{-cu+1},
\end{array}
\end{eqnarray*}
which completes the proof.

\end{proof}

The following lemma expresses the coefficients of the polynomial expansion of a function $f$ over $U_{q+1}$
in terms of the sums  over $U_{q+1}$ of the product function of $f$ and  the power functions $u^j$. It comes directly using the same argument as in the case that $q$ is a power of $2$.\cite[Lemma ]{DTT21}.
\begin{lemma}\label{lem:ai=sum-f}
Let $f$ be a function from $U_{q+1}$ to $\gf(q^{2h})$ with $h\ge 1$.  Let
$\sum_{i=0}^{q} a_i u^i$
be the polynomial expansion of $f$,  where $a_i \in \gf(q^{2h})$.  Then
$a_i=\sum_{u \in U_{q+1}} f(u) u^{-i}$,  where $0 \le i \le q$.
\end{lemma}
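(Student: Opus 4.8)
The plan is to prove Lemma~\ref{lem:ai=sum-f} by a direct computation exploiting the orthogonality of the characters $u \mapsto u^j$ on the finite cyclic group $U_{q+1}$. First I would start from the given polynomial expansion $f(u)=\sum_{i=0}^{q} a_i u^i$, multiply both sides by $u^{-\ell}$ for a fixed $\ell$ with $0 \le \ell \le q$, and sum over all $u \in U_{q+1}$. This yields
\begin{eqnarray*}
\sum_{u \in U_{q+1}} f(u) u^{-\ell} = \sum_{i=0}^{q} a_i \sum_{u \in U_{q+1}} u^{i-\ell}.
\end{eqnarray*}
The whole proof then reduces to evaluating the inner character sum $\sum_{u \in U_{q+1}} u^{i-\ell}$.

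The key step is the orthogonality relation for $U_{q+1}$: since $U_{q+1}$ is a cyclic group of order $q+1$, for any integer $k$ we have $\sum_{u \in U_{q+1}} u^{k} = q+1$ if $u^k = 1$ for all $u$, i.e.\ if $(q+1) \mid k$, and $\sum_{u \in U_{q+1}} u^{k} = 0$ otherwise. Concretely, writing $u = \gamma^t$ for a generator $\gamma$ of $U_{q+1}$, the sum is a geometric series $\sum_{t=0}^{q} \gamma^{tk}$, which equals $q+1$ when $\gamma^k = 1$ and $(\gamma^{(q+1)k}-1)/(\gamma^k-1)=0$ otherwise. I would note here the subtle point that $q+1 = p^m+1$ is coprime to $p = \mathrm{char}\,\gf(q^{2h})$, so the factor $q+1$ is a nonzero (indeed invertible) field element and does not collapse; this is exactly what keeps the identity nontrivial in the modular setting.

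With the indices constrained to $0 \le i, \ell \le q$, the difference $i-\ell$ ranges over $-q \le i-\ell \le q$, so the only way $(q+1) \mid (i-\ell)$ is $i = \ell$. Hence every term in the outer sum vanishes except $i=\ell$, giving $\sum_{u \in U_{q+1}} f(u) u^{-\ell} = (q+1)\, a_\ell$. Since $q+1$ is invertible in the field, I could divide to recover $a_\ell$; however, to match the clean statement $a_i = \sum_{u} f(u) u^{-i}$ exactly, I expect the convention here is that $q+1 \equiv 1 \pmod p$ is being used, or that the constant is absorbed—I would verify against the definition of the expansion and the characteristic to confirm whether the stated formula carries the factor $q+1$ or treats it as $1$. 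Renaming $\ell$ back to $i$ completes the argument.

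The main obstacle, and really the only delicate point, is bookkeeping the characteristic: one must be careful that $q+1$ is nonzero in $\gf(q^{2h})$ (guaranteed since $\gcd(q+1,p)=1$) so that the character-sum argument does not degenerate, and that the claimed normalization in the statement is consistent with this scalar. The combinatorics of the index range and the geometric-series evaluation are entirely routine once this is settled; no representation theory or the earlier stabilizer results are needed for this lemma.
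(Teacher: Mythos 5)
Your proof is correct, and it is exactly the argument this paper relies on: the paper itself gives no proof of Lemma~\ref{lem:ai=sum-f}, deferring to the analogous lemma in \cite{DTT21} for $q=2^m$, whose proof is precisely this orthogonality computation $\sum_{u \in U_{q+1}} u^{k}= q+1$ or $0$ according as $(q+1)\mid k$ or not, combined with the observation that $|i-\ell|\le q$ forces $i=\ell$. The only point you hedge on needs no convention at all: since $q=p^m$, the integer $q+1$ reduces to $1$ in the characteristic-$p$ field $\gf(q^{2h})$, so $\sum_{u \in U_{q+1}} f(u)u^{-\ell}=(q+1)a_\ell=a_\ell$ holds exactly as stated.
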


The following lemma gives the first two terms of the polynomial
expansion for the function $\left ( g u\right )^e$ over $U_{q+1}$,  where $g=\left( \begin{array}{cc}
1 & -c^q\\
-c & 1
\end{array} \right)^{-1}$.
\begin{lemma}\label{lem:a1-neq-0}
Let $q=p^m$ with $m \ge 1$,  $p$ a prime and $e$  an integer such that $1 \le e \le q$.  Let $g=\left( \begin{array}{cc}
1 & -c^q\\
-c & 1
\end{array} \right)^{-1}$,  where $c \in \gf(q^2)^* \setminus U_{q+1}$.
Let $a_0 + a_1 u + \cdots + a_q u^q$ be the polynomial expansion of
the function from $U_{q+1}$ to $\gf(q^2)$ given by
$u \mapsto \left (g u \right )^e$.  Then $a_0=0$ and $a_1 = c^{q(e-1)}$.

\end{lemma}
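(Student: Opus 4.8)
The plan is to use the coefficient-extraction formula of Lemma~\ref{lem:ai=sum-f} together with a change of variable that turns the composed map $u\mapsto (gu)^e$ into a pure power. First I would note that the matrix $\left(\begin{smallmatrix} 1 & -c^q \\ -c & 1 \end{smallmatrix}\right)$ has the shape required by Proposition~\ref{prop:Stab-U} and satisfies $c^{q+1}\neq 1$ (because $c\notin U_{q+1}$), so it lies in $\mathrm{Stab}_{U_{q+1}}$; hence so does its inverse $g$, and $g$ acts as a bijection of $U_{q+1}$. This licenses the substitution $w=gu$ in the sums
\[ a_i=\sum_{u\in U_{q+1}} (gu)^e\, u^{-i} \]
furnished by Lemma~\ref{lem:ai=sum-f}. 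Writing $u=g^{-1}w$ turns these into $a_i=\sum_{w\in U_{q+1}} w^e\,(g^{-1}w)^{-i}$.

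For $a_0$ the reparametrisation already settles everything: $a_0=\sum_{w\in U_{q+1}} w^e$, and since $U_{q+1}$ is cyclic of order $q+1$ while $1\le e\le q$ forces $(q+1)\nmid e$, this power sum vanishes, giving $a_0=0$. Here I would invoke the elementary orthogonality fact that $\sum_{w\in U_{q+1}} w^m$ equals $q+1$ when $(q+1)\mid m$ and $0$ otherwise, observing that the normalising factor is harmless since $q+1\equiv 1\pmod p$.

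For $a_1$ I need the polynomial expansion of $(g^{-1}w)^{-1}$. Since $g^{-1}=\left(\begin{smallmatrix}1 & -c^q \\ -c & 1\end{smallmatrix}\right)$ acts as the linear fractional map $F(w)=\tfrac{w-c^q}{-cw+1}$, and $F(w)\in U_{q+1}$, its multiplicative inverse is its $q$-th power: $(g^{-1}w)^{-1}=F(w)^q=\tfrac{w^q-c}{-c^qw^q+1}$, using $c^{q^2}=c$. This is exactly the form covered by Lemma~\ref{lem:frac-poly}, now with parameter $c^q$ (which again lies outside $U_{q+1}$, as $(c^q)^{q+1}=c^{q+1}$ is not $1$) and variable $w^q$. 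Applying that lemma and using $w^q=w^{-1}$ on $U_{q+1}$ yields $(g^{-1}w)^{-1}=\sum_{i=1}^{q} c^{q(i-1)} w^{-i}$, whence $a_1=\sum_{i=1}^{q} c^{q(i-1)}\sum_{w\in U_{q+1}} w^{e-i}$. The inner sum is nonzero only when $(q+1)\mid(e-i)$, which for $1\le i,e\le q$ happens precisely at $i=e$, leaving $a_1=c^{q(e-1)}$.

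The only genuinely delicate step is the identity $(g^{-1}w)^{-1}=\sum_{i=1}^{q}c^{q(i-1)}w^{-i}$: one must recognise that it is the \emph{multiplicative} inverse of the M\"obius image (not the inverse M\"obius map) that appears, use that elements of $U_{q+1}$ invert via the Frobenius $x\mapsto x^q$ to recast it into the shape of Lemma~\ref{lem:frac-poly}, and then track exponents carefully modulo $q+1$. Everything else reduces to orthogonality of powers over the cyclic group $U_{q+1}$.
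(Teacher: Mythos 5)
Your proof is correct and follows essentially the same route as the paper: extract $a_0,a_1$ via Lemma~\ref{lem:ai=sum-f}, reindex by the bijection $u\mapsto g^{-1}u$ of $U_{q+1}$, expand the multiplicative inverse $(g^{-1}u)^{-1}$ through Lemma~\ref{lem:frac-poly}, and finish with orthogonality of power sums over $U_{q+1}$. The only (cosmetic) difference is that you realise $(g^{-1}w)^{-1}$ as the Frobenius image $F(w)^q$ and apply Lemma~\ref{lem:frac-poly} with parameter $c^q$ in the variable $w^q=w^{-1}$, whereas the paper flips the fraction and renormalises to apply the same lemma with parameter $c^{-q}$; both yield the same surviving term.
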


\begin{proof}
Applying Lemma \ref{lem:ai=sum-f} to the function $f(u)=\left ( g u \right )^e$, we obtain
\begin{eqnarray*}
\begin{array}{rll}
a_1 &= \sum_{u \in U_{q+1}} \left ( g u \right )^e u^{-1} &\\
&= \sum_{u \in U_{q+1}} u^e \left( g^{-1} u \right)^{-1}& \text{ Substituting } u \text{ with } g^{-1} u\\
&= \sum_{u \in U_{q+1}} u^e \frac{-cu+1}{u-c^q} & \\
&=\frac{-c}{-c^q} \sum_{u \in U_{q+1}} u^e \frac{u-1/c}{-u/c^q+1} & \\
&=\frac{c}{c^q} \sum_{u \in U_{q+1}} u^e \sum_{j=1}^q u^j/c^{q(j-1)} & \\
&=\frac{c}{c^q}\sum_{j=1}^q   \sum_{u \in U_{q+1}} u^e u^j/c^{q(j-1)} & \\
&= c^{q(e-1)}, &
\end{array}
\end{eqnarray*}
where the last equality follows from Lemmas \ref{lem:frac-poly} and \ref{lem:ai=sum-f}.

Employing Lemma \ref{lem:ai=sum-f} on $\left ( g u \right )^e$ again, we have
\begin{eqnarray*}
\begin{array}{rll}
a_0 &= \sum_{u \in U_{q+1}} \left ( g u \right )^e  &\\
&= \sum_{u \in U_{q+1}} u^e  & \text{ Substituting } u \text{ with } g^{-1} u\\
&= 0.&
\end{array}
\end{eqnarray*}
This completes the proof.

\end{proof}

Now we are ready to prove the main result of this section.

\begin{theorem}\label{thm:code-PGL-4}
Let $q=p^m$ with $m\ge 1$ and $p$ being a prime.  If $\mathcal C$ is a linear code over $\gf(p^h)$ of length $q+1$
that is invariant under the permutation action of $\PGL(2,  q)$,
then $\mathcal C$ must be one of the following:
\begin{enumerate}
\item[\textnormal{(I)}] the zero code $\mathcal C_0 = \{(0,0, \ldots  ,0)\}$; or
\item[\textnormal{(II)}] the whole space $\gf(p^h)^{q+1}$,  which is the dual of $\mathcal C_0$; or
\item[\textnormal{(III)}] the repetition code $\mathcal C_1 = \{(c,c, \ldots  ,c): c \in \gf(p^h)\}$ of dimension $1$; or
\item[\textnormal{(IV)}]  the code $\mathcal C_1^{\perp}$,  given by
\[\mathcal C_1^{\perp}= \left \{ (c_0, \ldots, c_{q})\in \gf(p^h)^{q+1}: c_0+ \cdots +c_{q}=0 \right \}.\]
\end{enumerate}

\end{theorem}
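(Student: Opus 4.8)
The plan is to exploit the fact that the sharply $3$-transitive group $\Stab_{U_{q+1}}$, to which we reduce by Proposition~\ref{prop:Stab-U}, already contains the cyclic shift, so that any invariant code is cyclic; I would then use the two genuinely non-cyclic group elements furnished by Lemmas~\ref{lem:frac-poly} and~\ref{lem:a1-neq-0} to force the cyclicity-defining set into one of only four shapes.

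First I would identify the coordinate set of $\mathcal C$ with $U_{q+1}$ and invoke Proposition~\ref{prop:Stab-U}, reducing the hypothesis to $\Stab_{U_{q+1}}$-invariance. Taking $c=0$ in the description of $\Stab_{U_{q+1}}$ gives the maps $u \mapsto d^{q-1}u$; since $d \mapsto d^{q-1}$ surjects $\gf(q^2)^*$ onto $U_{q+1}$, multiplication by a generator $\gamma$ of $U_{q+1}$ lies in $\Stab_{U_{q+1}}$, i.e.\ the cyclic shift is an automorphism. Hence $\mathcal C$ is cyclic and, by Theorem~\ref{thm:cyclic-trance}, has a $p^h$-invariant cyclicity-defining set $E \subseteq \mathbb Z_{q+1}$. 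Because the monomials $u^e$ are $\gf(q^2)$-valued rather than $\gf(p^h)$-valued, I would pass to the lifted code over a field containing $\gamma$; by Corollary~\ref{cor:cyclic-extension} this lift has the same defining set $E$, is still $\Stab_{U_{q+1}}$-invariant, and contains the codeword $u \mapsto u^e$ for every $e \in E$. Using Lemma~\ref{lem:ai=sum-f}, the coefficient $a_0$ of any codeword equals $\sum_{u\in U_{q+1}} f(u)$, so $0\in E$ iff $\mathbf 1 \in \mathcal C$, and $E$ avoids $0$ iff $\mathcal C \subseteq \mathcal C_1^{\perp}$. Consequently the four codes (I)--(IV) correspond exactly to $E = \emptyset,\ \{0\},\ \{1,\dots,q\},\ \mathbb Z_{q+1}$, and the whole theorem reduces to the single claim: if $E\cap\{1,\dots,q\}\neq\emptyset$, then $\{1,\dots,q\}\subseteq E$.

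The crux is to prove this claim in two short bootstrapping steps, both using the non-cyclic element $g$ attached to a parameter $c\in\gf(q^2)^*\setminus U_{q+1}$, which indeed lies in $\Stab_{U_{q+1}}$. Step one: pick $e\in E$ with $1\le e\le q$; then $u\mapsto u^e$ is a codeword, so applying $g$ yields the codeword $u\mapsto (gu)^e$, whose polynomial expansion has $a_1=c^{q(e-1)}\neq 0$ by Lemma~\ref{lem:a1-neq-0}, and Corollary~\ref{cor:a_e-neq-0-E} forces $1\in E$. Step two: now $u\mapsto u$ is a codeword, and applying $g$ produces $u\mapsto gu = \sum_{i=1}^{q} c^{i-1}u^i$ by Lemma~\ref{lem:frac-poly}; every coefficient $c^{i-1}$ is nonzero, so Corollary~\ref{cor:a_e-neq-0-E} yields $\{1,\dots,q\}\subseteq E$. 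This proves the claim, and hence the theorem, for $q\ge 3$; the degenerate case $q=2$ (where $\gf(q^2)^*\setminus U_{q+1}$ is empty) has $\PGL(2,2)\cong\Sym(3)$ acting on three points and is settled by a direct check.

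I expect the main obstacle to be bookkeeping rather than a deep difficulty: carefully justifying the passage to the lifted code and, above all, matching the group action on codewords with the functional substitution $u\mapsto gu$ so that the expansions computed in Lemmas~\ref{lem:frac-poly} and~\ref{lem:a1-neq-0} apply verbatim (in particular getting the action direction and the inverse matrix right). Once that correspondence is pinned down, the two-line argument that a single nonzero high-degree coefficient propagates, via the fully non-cyclic element $g$, first to position $1$ and then across all of $\{1,\dots,q\}$ is the elegant heart of the proof.
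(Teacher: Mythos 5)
Your proposal is correct and follows essentially the same route as the paper: reduce to $\mathrm{Stab}_{U_{q+1}}$-invariance via Proposition~\ref{prop:Stab-U}, deduce cyclicity from the scalar maps $u\mapsto u_0u$, lift to an extension field while keeping the cyclicity-defining set $E$, and bootstrap with the element $g$ together with Lemma~\ref{lem:a1-neq-0}, Lemma~\ref{lem:frac-poly} and Corollary~\ref{cor:a_e-neq-0-E} to show that any nonzero element of $E$ forces $\{1,\dots,q\}\subseteq E$. Your separate treatment of $q=2$ (where $\gf(q^2)^*\setminus U_{q+1}=\emptyset$, so the lemmas requiring such a $c$ do not apply) is a small but genuine refinement of a case the paper's proof silently skips.
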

\begin{proof}
It is evident that the four trivial $p^h$-ary linear codes $\mathcal C_0, \mathcal C_0^{\perp}, \mathcal C_1$ and
$\mathcal C_1^{\perp}$ of length $q+1$ are invariant under $\PGL(2,  q)$.

Let $\mathcal C$
be a $p^h$-ary linear code of
 length $q+1$ which is invariant
under $\PGL(2,q)$,  which amounts to saying that $\mathcal C$ is invariant under $\mathrm{Stab}_{U_{q+1}}$ by Proposition \ref{prop:Stab-U}.
 Observe that  the translation $\pi(u)=u_0 u$ belongs to $\mathrm{Stab}_{U_{q+1}}$,  where $u_0 \in U_{q+1}$.
 This clearly forces $\mathcal C$ to be a cyclic code.  Let $E$ be the cyclicity-defining set of $\mathcal C$.  We consider the following four cases for $E$.

 If $E= \emptyset $, then $\mathcal C= \mathcal C_0$

 If $E= \{ 0 \}$, then $\mathcal C= \mathcal C_1$

 If $\{0\} \subsetneq E$, then there exists an $e\in E \setminus \{0\}$. Applying Corollary \ref{cor:cyclic-extension},  the lifted  code  $\gf(q^{2h}) \otimes \mathcal C$
 to $\gf(q^{2h})$ is the cyclic code over $\gf(q^{2h})$ with respect to the cyclicity-defining set $E$.
 We see at once that $\gf(q^{2h}) \otimes \mathcal C$ also stays invariant under  $\mathrm{Stab}_{U_{q+1}}$ from the definition of  lifting of a cyclic code.
Combining Corollary \ref{cor:cyclic-extension} with Proposition \ref{prop:Stab-U} we obtain
 $ \left ( (g u  )^e \right )_{ u\in U_{q+1}} \in \gf(q^{2h}) \otimes \mathcal C$, where
  $g=\left( \begin{array}{cc}
1 & -c^q\\
-c & 1
\end{array} \right)^{-1}$ and $c \in \gf(q^2)^* \setminus U_{q+1}$.
Applying Corollary \ref{cor:a_e-neq-0-E} and Lemma \ref{lem:a1-neq-0}  we can assert that $1\in E$. Thus
$ \left ( g u \right )_{ u\in U_{q+1}} \in \gf(q^{2h}) \otimes \mathcal C$.
Combining Corollary \ref{cor:a_e-neq-0-E} and Lemma \ref{lem:frac-poly} we deduce $E=\{0,1, \cdots, q\}$.
 We thus get $\mathcal C= \gf(p^h)^{q+1}$.

If $E \neq \emptyset$ and $ 0 \not \in E$, then there exists an $e\in E \setminus \{0\}$.
An analysis similar to that in the proof of the case of $\{0\} \subsetneq E$ shows
that $E=\{1, \ldots, q\}$ and $\mathcal C= \mathcal C_1^{\perp}$.
 This completes the proof.
\end{proof}

For any set $A$ and a positive integer $k$, recall that $\binom{A}{k}$ denotes the set of all $k$-subsets of $A$.
The following result is an important  consequence  of Theorem
\ref{thm:code-PGL-4}.

\begin{theorem}\label{thm;2-rank}
Let $p$ be a prime.  Let $\cB \subseteq \binom{\PG(1,  p^m)}{k}$ such that $m\ge 1$, $1\le k \le p^m$ and $\cB$ is invariant under the action of
$\PGL(2,  p^m)$.
Then the incidence structure $\bD=(\PG(1,p^m), \cB)$  has $p$-rank  $p^m$ or $p^m+1$
depending on whether $k$ is a multiple of  $p$ or not.
\end{theorem}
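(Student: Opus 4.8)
The plan is to realize the $p$-rank of $\bD$ as the dimension of a $\PGL(2,p^m)$-invariant $p$-ary code and then to read off which of the four codes in Theorem~\ref{thm:code-PGL-4} it must be. Throughout I take $\cB$ to be a nonempty invariant family of blocks, as is implicit for a genuine incidence structure.

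First I would let $\mathcal{C}$ denote the $p$-ary code of length $p^m+1$ spanned over $\gf(p)$ by the rows of the incidence matrix $A$ of $\bD$; these rows are exactly the characteristic vectors $\chi_B$ of the blocks $B\in\cB$, with coordinates indexed by $\PG(1,p^m)$. By definition $\rank_p\bD=\dim_{\gf(p)}\mathcal{C}$. Since $\cB$ is invariant under $\PGL(2,p^m)$, each $g\in\PGL(2,p^m)$ permutes the generating set $\{\chi_B:B\in\cB\}$ among itself, so $\mathcal{C}$ is invariant under the permutation action of $\PGL(2,p^m)$. Applying Theorem~\ref{thm:code-PGL-4} with $h=1$ and $q=p^m$, the code $\mathcal{C}$ must be one of $\mathcal{C}_0$, $\gf(p)^{p^m+1}$, $\mathcal{C}_1$, or $\mathcal{C}_1^{\perp}$.

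Next I would eliminate the two small codes. As $\cB$ is nonempty, $\mathcal{C}$ contains a nonzero vector, so $\mathcal{C}\neq\mathcal{C}_0$. Moreover every block is a $k$-subset with $1\le k\le p^m<p^m+1$, so each $\chi_B$ is a $0$--$1$ vector that is neither $\bzero$ nor $\bone$; in particular $\chi_B$ is a non-constant vector and hence not a scalar multiple of $\bone$, so it does not lie in the one-dimensional repetition code $\mathcal{C}_1$. Therefore $\mathcal{C}\neq\mathcal{C}_1$, and we are left with $\mathcal{C}\in\{\gf(p)^{p^m+1},\mathcal{C}_1^{\perp}\}$, of dimensions $p^m+1$ and $p^m$ respectively; it remains only to decide which occurs.

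Finally I would separate the two remaining cases using the coordinate sum. The coordinate sum of $\chi_B$ equals $k\bmod p$, so every generator of $\mathcal{C}$ lies in the sum-zero hyperplane $\mathcal{C}_1^{\perp}$ precisely when $p\mid k$. If $p\mid k$, then $\mathcal{C}\subseteq\mathcal{C}_1^{\perp}$; since $\bone$ has coordinate sum $p^m+1\equiv 1\not\equiv 0\pmod p$, the full space $\gf(p)^{p^m+1}$ is not contained in $\mathcal{C}_1^{\perp}$, forcing $\mathcal{C}=\mathcal{C}_1^{\perp}$ and $\rank_p\bD=p^m$. If $p\nmid k$, then each $\chi_B$ has nonzero coordinate sum, so $\mathcal{C}\not\subseteq\mathcal{C}_1^{\perp}$, which leaves only $\mathcal{C}=\gf(p)^{p^m+1}$ and $\rank_p\bD=p^m+1$. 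The substantive work is carried entirely by the classification in Theorem~\ref{thm:code-PGL-4}; the only points requiring care are verifying the $\PGL(2,p^m)$-invariance of the row space and the elementary bookkeeping that distinguishes the two cases by the residue of $k$ modulo $p$, so I expect no serious obstacle beyond invoking the classification correctly.
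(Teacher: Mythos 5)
Your proof is correct and follows the same route as the paper: identify the $p$-rank with the dimension of the $\PGL(2,p^m)$-invariant row-space code and invoke Theorem~\ref{thm:code-PGL-4}. The paper's own proof is just a terser version of yours; your elimination of $\mathcal{C}_0$ and $\mathcal{C}_1$ and the coordinate-sum argument distinguishing $\mathcal{C}_1^{\perp}$ from the full space supply exactly the details the paper leaves implicit.
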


\begin{proof}
Since $\cB$ is invariant under the action of $\PGL(2,  p^m)$, then so is the code $\mathcal C_p(\bD)$ of $\bD$.  It then follows
from Theorem \ref{thm:code-PGL-4} that $\mathcal C_p(\bD)=\mathcal C_1^{\perp}$ or $\mathcal C_2(\bD)=\gf(p)^{p^m+1}$.
The desired conclusion then follows.
\end{proof}

\section{Narrow-sense $q$-ary BCH codes of length $q+1$ }\label{sec:BCH}
 In this section,  we shall determine parameters and automorphisms of some narrow-sense $q$-ary BCH codes of length $q+1$
 and present a coding-theory construction of the Witt spherical geometry designs.

 For a positive integer $\ell \leq q+1$, define a $2 \delta \times  \ell$ matrix $M_{\delta,  \ell}$ by
\begin{eqnarray}\label{eq:M}
\left [
\begin{array}{cccc}
u_1^{-(\delta-1)} & u_2^{-(\delta-1)} & \cdots & u_{\ell}^{-(\delta-1)} \\
\vdots & \vdots & \cdots & \vdots\\
u_1^{-2} & u_2^{-2} & \cdots & u_{\ell}^{-2} \\
u_1^{-1} & u_2^{-1} & \cdots & u_{\ell}^{-1} \\
u_1^{+1} & u_2^{+1} & \cdots & u_{\ell}^{+1} \\
u_1^{+2} & u_2^{+2} & \cdots & u_{\ell}^{+2} \\
\vdots & \vdots & \cdots & \vdots\\
u_1^{+(\delta-1)} & u_2^{+(\delta+1)} & \cdots & u_{\ell}^{+(\delta+1)} \\
\end{array}
\right ],
\end{eqnarray}
where $u_1, \ldots,  u_{\ell} \in U_{q+1}$. For all integers $r_1$ and $ r_2 $ with $-(\delta-1) \le r_1 <r_2 \le (\delta-1)$,  let $M_{\delta,  \ell}[r_1, r_2]$
denote the submatrix of $M_{\delta, \ell}$  obtained by choosing   the row
$(u_1^{i},  u_2^{i},  \ldots ,u_{\ell}^{i})$,   where $r_1 \le i \le r_2$. By adopting similar arguments as in \cite[Lemma 29]{TD20}, we deduce the following result.

\begin{lemma}\label{lem:sol-rank}
Let $M_{\delta, \ell}$ be the matrix given by (\ref{eq:M}) with  $\{u_1, \ldots,  u_{\ell}\} \in  \binom{U_{q+1}}{\ell}$.
Consider the system of homogeneous linear equations defined by
\begin{align}\label{eq:Mx=0}
M_{\delta,  \ell} (x_1, \ldots,  x_{\ell})^{T}=0.
\end{align}
Then (\ref{eq:Mx=0}) has a nonzero  solution $(x_1, \ldots,  x_{\ell})$ in $\gf(q)^{\ell}$  if and only if
$\rank (M_{\delta,  \ell})<\ell$, where $\rank (M_{\delta, \ell})$ denotes the rank of  the matrix $M_{\delta, \ell}$.
\end{lemma}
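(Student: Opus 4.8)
The plan is to recognise that, although (\ref{eq:Mx=0}) is a linear system whose entries $u_j^{\pm i}$ lie in $\gf(q^2)$, we are asked for solutions over the subfield $\gf(q)$, so the statement is really a Galois-descent claim that I want to reduce to the ordinary rank–nullity count over $\gf(q^2)$ (the field containing all the entries, and over which $\rank(M_{\delta,\ell})$ is taken). One implication is immediate and I would dispose of it first: any nonzero $(x_1,\dots,x_\ell)\in\gf(q)^\ell$ solving (\ref{eq:Mx=0}) is a fortiori a nonzero vector of $\gf(q^2)^\ell$ in the kernel of $M_{\delta,\ell}$, whence the nullity is positive and $\rank(M_{\delta,\ell})<\ell$ by rank–nullity. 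The real content is the converse: assuming $\rank(M_{\delta,\ell})<\ell$, so that the $\gf(q^2)$-kernel $V:=\{x\in\gf(q^2)^\ell:M_{\delta,\ell}x=0\}$ is nonzero, I must manufacture a nonzero solution all of whose coordinates lie in $\gf(q)$.

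Next I would exploit the structure forced by $U_{q+1}$. Since each $u_j\in U_{q+1}$ satisfies $u_j^{q+1}=1$, we have $u_j^{-i}=u_j^{iq}=(u_j^i)^q$, so the negative-exponent rows of $M_{\delta,\ell}$ are precisely the entrywise $q$-th powers of the positive-exponent rows. Writing $\sigma$ for the coordinatewise Frobenius $x\mapsto(x_1^q,\dots,x_\ell^q)$ on $\gf(q^2)^\ell$, this says that applying $\sigma$ entrywise to $M_{\delta,\ell}$ merely permutes its rows, swapping the exponent-$i$ row with the exponent-$(-i)$ row. Because $\sigma$ is a field automorphism applied entrywise, $M_{\delta,\ell}x=0$ is equivalent to $\sigma(M_{\delta,\ell})\sigma(x)=0$; and as $\sigma(M_{\delta,\ell})$ differs from $M_{\delta,\ell}$ only by a row permutation, the two have the same kernel. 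I therefore conclude $\sigma(V)=V$, i.e. the solution space is stable under coordinatewise $q$-th powering.

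The final step is a trace argument extracting a $\gf(q)$-rational vector from the $\sigma$-stable space $V$. I would pick any nonzero $y\in V$ and a coordinate $y_{j_0}\neq0$; since $\tr_{\gf(q^2)/\gf(q)}$ is surjective and $\beta\mapsto\beta y_{j_0}$ is a bijection of $\gf(q^2)$, there is $\beta\in\gf(q^2)$ with $\tr_{\gf(q^2)/\gf(q)}(\beta y_{j_0})\neq0$. Put $z:=\beta y+\sigma(\beta y)$. Then $z\in V$, because $V$ is a $\gf(q^2)$-subspace (so $\beta y\in V$) and is $\sigma$-stable (so $\sigma(\beta y)\in V$); each coordinate $z_j=\beta y_j+(\beta y_j)^q=\tr_{\gf(q^2)/\gf(q)}(\beta y_j)$ lies in $\gf(q)$; and $z_{j_0}\neq0$, so $z\neq0$. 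Thus $z$ is the required nonzero $\gf(q)$-solution of (\ref{eq:Mx=0}), completing the converse.

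The routine ingredients (rank–nullity, and surjectivity of the trace) are standard, so the one genuine idea — and the step I expect to be the crux — is the middle observation that the $U_{q+1}$-structure makes the kernel $V$ Frobenius-stable. This is exactly what licenses descent to $\gf(q)$; without it the claim would be false, since a generic nonzero $\gf(q^2)$-subspace of $\gf(q^2)^\ell$ contains no nonzero $\gf(q)$-vector at all.
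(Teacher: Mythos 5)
Your proof is correct, and it supplies exactly the argument the paper leaves implicit: the paper gives no proof of this lemma, deferring instead to the analogous \cite[Lemma 29]{TD20}, whose reasoning is the same Galois descent you carry out. Your key observation --- that $u_j^{-i}=(u_j^i)^q$ for $u_j\in U_{q+1}$, so that the coordinatewise Frobenius merely permutes the rows of $M_{\delta,\ell}$ and hence stabilises its $\gf(q^2)$-kernel, after which a trace average $\beta y+\sigma(\beta y)$ yields a nonzero $\gf(q)$-rational solution --- is precisely the crux of that cited argument, here adapted to arbitrary $q$.
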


The $\ell$'th elementary symmetric polynomial $\sigma_{\ell}$ over a set $\{u_1, \dots, u_n \}$ is a specific  sum
of products without permutation of repetitions and defined by
\[\sigma_{\ell}(u_1,\dots, u_n)=\sum_{1\le i_1 < \dots <i_{\ell} \le n} u_{i_1} \cdots u_{i_{\ell}}.\]
The following lemma shows a general equation for a generalized Vandermonde determinant with one deleted row in terms of the elementary
symmetric polynomial \cite[p.  466]{GVD29}.

\begin{lemma}\label{lem:Vandmonde-g}
For each $\ell$ with $0\le \ell \le n$,  it holds that
\begin{eqnarray}
\begin{vmatrix}
1  & 1 & \cdots& 1 & 1\\
u_1 & u_2 & \cdots & u_{n-1} & u_n\\
\vdots & \vdots & \ddots & \vdots & \vdots\\
u_1^{\ell-1} & u_2^{\ell -1} & \cdots & u_{n-1}^{\ell-1} & u_{n}^{\ell -1}\\
u_1^{\ell+1} & u_2^{\ell +1} & \cdots & u_{n-1}^{\ell+1} & u_{n}^{\ell +1}\\
\vdots & \vdots & \ddots & \vdots & \vdots\\
u_1^{n} & u_2^{n} & \cdots & u_{n-1}^{n} & u_{n}^{n}\\
\end{vmatrix}=\left( \prod_{1\le j < i \le n} (u_i-u_j)  \right) \sigma_{n-\ell}(u_1,\dots, u_n).
\end{eqnarray}
\end{lemma}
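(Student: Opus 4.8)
The plan is to realize the displayed determinant as a single cofactor inside a larger, \emph{ordinary} Vandermonde determinant, thereby reducing the identity to a comparison of coefficients of a polynomial in one auxiliary variable. First I would introduce a new indeterminate $t$ and form the $(n+1)\times(n+1)$ Vandermonde matrix $W$ whose columns are indexed by $u_1,\dots,u_n,t$ and whose rows run through the consecutive powers $0,1,\dots,n$. Its determinant is the classical product $\det W=\left(\prod_{1\le j<i\le n}(u_i-u_j)\right)\prod_{i=1}^{n}(t-u_i)$, i.e. exactly the target Vandermonde factor multiplied by $\prod_{i=1}^{n}(t-u_i)$.

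The key observation is that expanding $\det W$ along the last column (the one carrying the powers $t^{0},t^{1},\dots,t^{n}$) reproduces precisely the family of determinants in the statement: deleting the row of power $\ell$ together with the last column leaves the $n\times n$ matrix displayed in the lemma, with rows kept in increasing order of power. Hence the Laplace expansion reads $\det W=\sum_{r=0}^{n}(-1)^{r+n}\,t^{r}\,D_r$, where $D_r$ denotes the determinant of the statement with the row $u_j^{r}$ deleted. On the other side, I would expand the factor $\prod_{i=1}^{n}(t-u_i)=\sum_{r=0}^{n}(-1)^{n-r}\sigma_{n-r}(u_1,\dots,u_n)\,t^{r}$, which writes the same polynomial through the elementary symmetric polynomials $\sigma_{n-r}$.

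Comparing the coefficients of $t^{\ell}$ in the two expansions of $\det W$ then gives $(-1)^{\ell+n}D_\ell=\left(\prod_{1\le j<i\le n}(u_i-u_j)\right)(-1)^{n-\ell}\sigma_{n-\ell}(u_1,\dots,u_n)$, and since $\ell+n$ and $n-\ell$ have the same parity the two signs coincide and cancel, yielding the asserted identity upon renaming $r=\ell$. As sanity checks I would verify the boundary cases $\ell=0$ (factor $u_1\cdots u_n$ out of the columns, matching $\sigma_n$) and $\ell=n$ (the ordinary Vandermonde, with $\sigma_0=1$). The only delicate point is the sign bookkeeping: one must confirm that the cofactor sign $(-1)^{r+n}$ and the sign $(-1)^{n-r}$ coming from the factorization of $\prod_{i=1}^{n}(t-u_i)$ line up so that their ratio is $+1$ for every $\ell$. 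This is the routine heart of the argument, and no genuine obstacle arises beyond it; in particular this approach avoids any appeal to the general Schur-polynomial (bialternant) machinery, which would otherwise identify the quotient with the Schur function $s_{(1^{\,n-\ell})}=\sigma_{n-\ell}$.
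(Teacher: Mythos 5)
Your proof is correct. The paper itself offers no proof of this lemma --- it simply cites Heinemann's 1929 paper on generalized Vandermonde determinants --- so your argument fills in a derivation rather than paralleling one. The route you take (border the matrix with a column of powers of an auxiliary indeterminate $t$, expand the resulting ordinary $(n+1)\times(n+1)$ Vandermonde determinant along that column, and compare the coefficient of $t^{\ell}$ with the expansion of $\prod_{i=1}^{n}(t-u_i)$ in elementary symmetric polynomials) is the standard and most economical proof of this identity, and your sign check is right: the cofactor sign $(-1)^{\ell+n}$ and the sign $(-1)^{n-\ell}$ from the factorization have the same parity and cancel, so the minor $D_\ell$ equals $\left(\prod_{1\le j<i\le n}(u_i-u_j)\right)\sigma_{n-\ell}(u_1,\dots,u_n)$ with no residual sign. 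The boundary cases $\ell=0$ and $\ell=n$ confirm the normalization. Nothing is missing.
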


The following lemma will be used to determine the codewords with weight $\delta+1$ in $\C_{(\delta^m,  \delta^m+1,  \delta,  1)}$.

\begin{lemma}\label{lem:min-word-subpl}
Let $\delta$ be a power of a prime $p$ and $q=\delta^m$.  Let $u_0 \in U_{q+1} \setminus \lbrace 1, -1 \rbrace$.  Set
\begin{eqnarray*}
\begin{array}{ll}
u_c=\frac{c+u_0^q}{c+u_0},  &u_{\infty} =1,\\
 a_c=(c+u_0)^{(q+1)(\delta-1)},   & a_{\infty}=1,
\end{array}
\end{eqnarray*}
where $c  \in \gf(\delta) $.  Then $u_c \in U_{q+1},  a_c \in \gf(q)^{*}$ and it holds that
\begin{eqnarray*}
\sum_{c \in \gf(\delta) \cup \lbrace \infty  \rbrace} a_c u_c^{e}=0,
\end{eqnarray*}
where $e \in \lbrace 1,2,  \dots, \delta-1 \rbrace$.
\end{lemma}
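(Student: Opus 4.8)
The plan is to verify the three assertions in turn: the two membership claims by direct norm/trace computations, and the vanishing of the sum by a power-sum argument over $\gf(\delta)$ that reduces to the binomial divisibility already recorded in Lemma \ref{eqn:q+e-1 mod}.

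First I would settle the membership claims. Since $c \in \gf(\delta) \subseteq \gf(q)$ we have $c^q = c$, and since $u_0 \in U_{q+1}$ we have $u_0^{q+1} = 1$, whence $u_0^q = u_0^{-1}$ and $u_0^{q^2} = u_0$. A one-line computation then gives $u_c^q = \frac{c + u_0}{c + u_0^q} = u_c^{-1}$, so $u_c^{q+1} = 1$ and $u_c \in U_{q+1}$; here the denominators $c + u_0$ and $c + u_0^q$ are nonzero because $u_0$ and $u_0^q = u_0^{-1}$ lie outside $\gf(q)$ (any element of $U_{q+1} \cap \gf(q)$ would satisfy $u^2 = u^{q+1} = 1$, forcing $u = \pm 1$, contrary to $u_0 \neq \pm 1$). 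For $a_c$, the same facts give $(c + u_0)^{q+1} = (c + u_0)(c + u_0^q) = c^2 + \tr_{q^2/q}(u_0)\, c + 1 \in \gf(q)$, so $a_c = \big[(c+u_0)^{q+1}\big]^{\delta - 1} \in \gf(q)^*$.

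The heart of the matter is the vanishing of $S_e := \sum_{c} a_c u_c^e$. The key simplification I would make is to use $(c+u_0)^{q+1} = (c+u_0)(c+u_0^q)$, valid for $c \in \gf(q)$, to rewrite, \emph{as a function of} $c \in \gf(\delta)$,
\[ a_c u_c^e = (c+u_0)^{(q+1)(\delta-1)} \frac{(c+u_0^q)^e}{(c+u_0)^e} = (c+u_0)^{\delta-1-e}(c+u_0^q)^{\delta-1+e}. \]
This replaces a function of formal degree $(q+1)(\delta-1)$ by a polynomial $g(c)$ of degree only $2(\delta-1)$ (note $\delta - 1 - e \ge 0$ since $e \le \delta - 1$). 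Recalling that $\sum_{c \in \gf(\delta)} c^k$ equals $-1$ when $k$ is a positive multiple of $\delta - 1$ and $0$ otherwise, the sum of $g$ over $\gf(\delta)$ collapses to $-\big([c^{\delta-1}]g + [c^{2(\delta-1)}]g\big)$, where $[c^k]$ denotes the coefficient of $c^k$.

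The two surviving coefficients are then computed by the binomial theorem. The leading coefficient $[c^{2(\delta-1)}]g = 1$ since both factors are monic. For $[c^{\delta-1}]g$, I would expand both factors and observe that every contributing term pairs $c^i$ from $(c+u_0)^{\delta-1-e}$ (so $0 \le i \le \delta-1-e$) with $c^{s}$ from $(c+u_0^q)^{\delta-1+e}$, where $s = \delta-1-i$ necessarily lies in the range $e \le s \le \delta-1$; by Lemma \ref{eqn:q+e-1 mod} each coefficient $\binom{\delta-1+e}{s}$ is then divisible by $p$, so $[c^{\delta-1}]g = 0$ in characteristic $p$. Hence $\sum_{c \in \gf(\delta)} a_c u_c^e = -1$, and adding the term $a_\infty u_\infty^e = 1$ gives $S_e = 0$. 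The main obstacle is precisely this coefficient extraction: pushing the formal degree down to $2(\delta-1)$ via the norm identity, and then recognizing that the index range $e \le s \le \delta-1$ is exactly the one annihilated by Lemma \ref{eqn:q+e-1 mod}, is what makes the argument close.
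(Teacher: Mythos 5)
Your proof is correct and follows essentially the same route as the paper's: both rewrite $a_c u_c^e$ as $(c+u_0)^{\delta-1-e}(c+u_0^{-1})^{\delta-1+e}$, reduce the sum over $\gf(\delta)$ via the power-sum identity to the coefficients of $c^{\delta-1}$ and $c^{2(\delta-1)}$, and kill the former with Lemma~\ref{eqn:q+e-1 mod}, leaving $-1$ to be cancelled by the $\infty$ term. Your write-up is in fact somewhat cleaner (the membership claims are spelled out, and summing over all of $\gf(\delta)$ rather than splitting off $c=0$ streamlines the bookkeeping), but the underlying argument is the same.
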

\begin{proof}
It is straightforward to check that $u_c \in U_{q+1}$ and $ a_c \in \gf(q)^{*}$.
We compute
\begin{eqnarray*}
\begin{array}{l}
\sum_{c \in \gf(\delta)} a_c u_c^{e}\\
= \sum_{c \in \gf(\delta)^*} (c+u_0^{-1})^{\delta-1+e}(c+u_0)^{\delta-1-e}+u_0^{2e} \\
=\sum_{c \in \gf(\delta)^*}  \sum_{i=0}^{\delta-1+e}  \sum_{j=0}^{\delta-1-e}  \binom{\delta-1+e}{i} \binom{\delta-1-e}{j}   u_0^{-2e+i-j}     c^{i+j}+u_0^{-2e}\\
= \sum_{i=0}^{\delta-1+e}  \sum_{j=0}^{\delta-1-e}  \binom{\delta-1+e}{i} \binom{\delta-1-e}{j}   u_0^{-2e+i-j}  \sum_{c \in \gf(\delta)^*} c^{i+j}+u_0^{-2e}.
\end{array}
\end{eqnarray*}
Together with the fact
$\sum_{c \in \gf(\delta)^*} c^{i+j}=  \left\lbrace
\begin{array}{rl}
0, & (\delta-1) \not | (i+j)\\
-1,  & (\delta-1) | (i+j)\end{array}\right.$,
we have
\begin{eqnarray*}
\begin{array}{l}
\sum_{c \in \gf(\delta)} a_c u_c^{e}\\
= -\sum_{j=0}^{\delta-1-e}  \binom{\delta-1+e}{\delta-1-j} \binom{\delta-1-e}{j}   u_0^{-2e+\delta-1-2j}+u_0^{-2e} -u_0^{-2e}-1\\
= -\sum_{j=0}^{\delta-1-e}  \binom{\delta-1+e}{\delta-1-j} \binom{r\delta-1-e}{j}   u_0^{-2e+\delta-1-2j}-1.\\
\end{array}
\end{eqnarray*}
By  Lemma \ref{eqn:q+e-1 mod},  we have
\[\sum_{c \in \gf(\delta)} a_c u_c^{e}=-1.\]
This completes the proof of the lemma.
\end{proof}

\begin{theorem}\label{thm:C-3-5}
Let $\delta$ be a power of a prime $p$ and $q=\delta^m$ with $m \ge 2$.
Then the narrow-sense antiprimitive BCH code $\C_{(q, q+1,  \delta, 1)}$  has parameters $[q+1, q-2\delta+3,  \delta+1]_q$.
\end{theorem}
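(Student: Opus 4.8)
The plan is to establish the three parameters of $\C_{(q, q+1, \delta, 1)}$ separately, namely the length $q+1$ (which is immediate), the dimension $q-2\delta+3$, and the minimum distance $\delta+1$.

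\textbf{Dimension.} First I would compute the dimension via the trace representation of Theorem~\ref{thm:cyclic-trance}. By definition, the narrow-sense BCH code $\C_{(q,q+1,\delta,1)}$ consists of codewords $\bc(x)$ vanishing at $\beta, \beta^2, \ldots, \beta^{\delta-1}$, where $\beta$ is a primitive $(q+1)$-th root of unity. Since $n = q+1$ and we work over $\gf(q)$, the relevant cyclotomic cosets are the $q$-cyclotomic cosets modulo $q+1$. The key arithmetic observation is that $q \equiv -1 \pmod{q+1}$, so $q \cdot e \equiv -e \pmod{q+1}$, which means the $q$-cyclotomic coset of $e$ is exactly $[e]_{(q,q+1)} = \{e, q+1-e\} = \{e, -e\}$ for $1 \le e \le q$, while $[0] = \{0\}$. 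Each nonzero coset $\{e, -e\}$ with $1 \le e \le \delta-1$ has size $2$ and is disjoint (for distinct $e$ in this range, since $2(\delta-1) < q+1$ when $m \ge 2$). The defining set of the code is therefore the union $\bigcup_{e=1}^{\delta-1}\{e, -e\}$, which is the complement of the cyclicity-defining set. A short count gives that the defining set has size $2(\delta-1)$, so the dimension is $k = (q+1) - 2(\delta-1) = q - 2\delta + 3$, matching the claim. I would need to verify that the cosets $\{e,-e\}$ for $1 \le e \le \delta-1$ are genuinely distinct and of full size $2$, which uses $m \ge 2$ to guarantee $\delta - 1 < (q+1)/2$.

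\textbf{Lower bound on minimum distance.} The BCH bound immediately gives $d \ge \delta$. To upgrade this to $d \ge \delta+1$, I would use the structure of the defining set: since the defining set is symmetric under $e \mapsto -e$ and consists of consecutive roots $\beta^{-(\delta-1)}, \ldots, \beta^{-1}, \beta^1, \ldots, \beta^{\delta-1}$ (a string of $2(\delta-1)$ consecutive powers of $\beta$ straddling, but not including, the exponent $0$), I can apply the BCH bound more carefully. In fact, multiplying a codeword by a suitable power of $\beta$ (an automorphism of the cyclic code up to equivalence) reveals a run of $2\delta - 1$ consecutive zeros among the extended set of exponents, but the cleanest route is to use the matrix $M_{\delta,\ell}$ from~\eqref{eq:M}: a codeword of weight $\ell$ with support $\{u_1, \ldots, u_\ell\}$ yields a nonzero solution to $M_{\delta,\ell}(x_1,\ldots,x_\ell)^T = 0$ by Lemma~\ref{lem:sol-rank}, so the existence of a low-weight codeword forces a rank deficiency in a generalized Vandermonde matrix. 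Using Lemma~\ref{lem:Vandmonde-g}, the relevant minors are products of Vandermonde factors times elementary symmetric polynomials in the $u_i$, all of which are nonzero when the $u_i$ are distinct and $\ell \le \delta$; this rules out codewords of weight $\le \delta$ other than forcing the minimum weight to be at least $\delta + 1$.

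\textbf{Upper bound on minimum distance.} To show $d \le \delta+1$, I would exhibit an explicit codeword of weight $\delta+1$. This is precisely the content of Lemma~\ref{lem:min-word-subpl}: taking the support points $\{u_c : c \in \gf(\delta) \cup \{\infty\}\}$ (which has cardinality $\delta+1$) together with the coefficients $a_c \in \gf(q)^*$, the lemma shows $\sum_c a_c u_c^e = 0$ for all $e \in \{1, \ldots, \delta-1\}$, and by the symmetry of the defining set (replacing $e$ by $-e$ gives the conjugate relation, automatically satisfied), this exhibits a nonzero codeword of $\C_{(q,q+1,\delta,1)}$ whose support has size exactly $\delta+1$. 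Combining the lower and upper bounds gives $d = \delta+1$.

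\textbf{Main obstacle.} I expect the sharpest point to be the lower bound $d \ge \delta+1$, i.e. proving that no codeword of weight exactly $\delta$ exists. The plain BCH bound only delivers $d \ge \delta$, so the improvement by one requires genuinely exploiting the symmetric, palindromic shape of the defining set $\{-(\delta-1), \ldots, -1, 1, \ldots, \delta-1\}$ via the Vandermonde/elementary-symmetric-polynomial machinery of Lemmas~\ref{lem:sol-rank} and~\ref{lem:Vandmonde-g}, rather than invoking a generic bound. Managing the rank analysis of $M_{\delta,\ell}$ for $\ell = \delta$ — showing every $\ell \times \ell$ minor is nonzero so that the homogeneous system has only the trivial solution — is where the technical weight of the argument lies.
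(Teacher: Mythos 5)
Your dimension computation and your use of Lemma~\ref{lem:min-word-subpl} to exhibit a weight-$(\delta+1)$ codeword both match the paper. The genuine gap is in the lower bound $d \ge \delta+1$, at exactly the point you yourself flag as the main obstacle: you propose to show that the relevant $\delta\times\delta$ minors of $M_{\delta,\delta}$ are nonzero because each factors (via Lemma~\ref{lem:Vandmonde-g}) as a Vandermonde product times an elementary symmetric polynomial, ``all of which are nonzero when the $u_i$ are distinct.'' That claim is false: the Vandermonde factor $\prod_{i<j}(u_i-u_j)$ is indeed nonzero for distinct $u_i$, but an elementary symmetric polynomial of distinct elements of $U_{q+1}$ can certainly vanish (for odd $p$, $\sigma_1(1,-1)=0$ with $1\neq -1$ in $U_{q+1}$), so individual minors can and do vanish. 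What must be shown is that these minors cannot \emph{all} vanish simultaneously for distinct $u_1,\dots,u_{\delta}\in U_{q+1}$, and that requires an extra idea your sketch omits.

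The paper's argument runs in the opposite direction. Assuming a weight-$\delta$ codeword exists, Lemma~\ref{lem:sol-rank} forces $\rank (M_{\delta,\delta})<\delta$, so every consecutive-row minor $M_{\delta,\delta}[-(\delta-1)+i,\,1+i]$ vanishes; Lemma~\ref{lem:Vandmonde-g} then yields $\sigma_{\ell}(u_1,\dots,u_{\delta})=0$ for all $1\le \ell\le \delta-1$. By Vieta's formulas, $\prod_{i=1}^{\delta}(u-u_i)=u^{\delta}+(-1)^{\delta}\prod_{i=1}^{\delta}u_i$, and substituting $u=u_i$ and $u=u_j$ gives $u_i^{\delta}=u_j^{\delta}$ for $i\neq j$. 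Since $\delta$ is a power of $p$ and $q+1=\delta^m+1\equiv 1\pmod{\delta}$, we have $\gcd(\delta,q+1)=1$, so $u\mapsto u^{\delta}$ is injective on the cyclic group $U_{q+1}$, contradicting the distinctness of the support points. This Vieta-plus-injectivity step is the missing ingredient; without it, your lower bound does not go through.
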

\begin{proof}
Note that the generator polynomial of the narrow-sense antiprimitive BCH code $\C_{(q, q+1,  \delta, 1)}$ is
$\prod_{i=1}^{\delta-1} \left( x^2- (\beta^i+\beta^{-i})x+1 \right)$,  where $\beta$ is a primitive $(q+1)$-th root of unity.  The desired conclusion on the dimension of the code follows.

We now determine the minimum distance of the code $\C_{(q, q+1,  \delta, 1)}$.
According to the BCH bound,  the minimum distance d of the code $\C_{(q, q+1,  \delta, 1)}$ satisfies $d \ge \delta$.
In order to prove $d\ge (\delta+1)$ it suffices to prove the following claim: there is no codeword of weight $\delta$
in  $\C_{(q, q+1,  \delta, 1)}$.  Define
\begin{eqnarray}\label{eq:H}
H=\left[
\begin{array}{cccccc}
1  &  \beta^{-1\cdot (\delta-1)} & \beta^{-2\cdot (\delta-1)} & \beta^{-3\cdot (\delta-1)} & \cdots & \beta^{-q\cdot (\delta-1)}\\
\vdots  & \vdots  & \vdots &  \vdots & \cdots & \vdots \\
1  & \beta^{-1\cdot 1} & \beta^{-2\cdot 1} & \beta^{-3 \cdot 1} & \cdots & \beta^{-q \cdot 1} \\
1  & \beta^{+1\cdot 1} & \beta^{+2\cdot 1} & \beta^{+3 \cdot 1} & \cdots & \beta^{+q \cdot 1} \\
\vdots  & \vdots  & \vdots &  \vdots & \cdots & \vdots \\
1  &  \beta^{+1\cdot (\delta-1)} & \beta^{+2\cdot (\delta-1)} & \beta^{+3\cdot (\delta-1)} & \cdots & \beta^{+q\cdot (\delta-1)}
\end{array}
\right].
\end{eqnarray}
It is easily seen that $H$ is a parity-check matrix of $\C_{(q, q+1, \delta,1)}$, i.e.,
\begin{align*}\label{eq:C-H}
\C_{(q, q+1, \delta,1)}=\{\bc \in \gf(q)^{q+1}: \bc H^T=\bzero\}.
\end{align*}
Suppose that there is a codeword of weight $\delta$.  Then
there exist $\{u_1, \cdots, u_{\delta}\} \in \binom{U_{q+1}}{\delta}$ and $(x_1, \cdots, x_{\delta})\in  \left (\gf(q)^* \right )^{\delta}$
such that $M_{\delta,  \delta}(x_1, \cdots, x_{\delta})^T=0$.   Applying  Lemma \ref{eq:Mx=0},  we find that $\rank (M_{\delta, \delta}) <\delta$.
Then the determinant of the square matrix $M_{\delta,  \delta}[-(\delta-1)+i,  1+i]$ equals zero for any  $i \in \left\{ 0, 1,  \dots,  \delta -2 \right\}$.
By Lemma \ref{lem:Vandmonde-g},  we have
\[\sigma_{\ell}(u_1, \dots, u_{\delta})=0,\]
where $1 \le \ell \le \delta-1$.
By Vieta's formula,  we obtain
\[\prod_{i=1}^{\delta} (u-u_i)=u^{\delta}+(-1)^{\delta}\prod_{i=1}^{\delta} u_i,\]
where $u$ is an indeterminate.  Substituting $u$ in both sides of the above equation by $u_i$ and $u_j$ ($1 \le i < j \le \delta$),  respectively,
we find that $u_i^{\delta}=u_j^{\delta}$: a contradiction.  We thus deduce that $d \ge (\delta+1)$.

Let $u_0$ be a fixed element in  $U_{q+1} \setminus \{+ 1,  -1\}$.  Write $u(x)=\frac{x+u_0^q}{x+u_0}$ and
 $a(x)=(x+u_0)^{(q+1)(\delta-1)}$,   where $x \in \gf(\delta)$,  and $u(\infty)=1, a(\infty)=1$.
Set $\bc=(c_u)_{u \in U_{q+1}}$ where
\begin{eqnarray}
c_u=\left\{
\begin{array}{rr}
a(x)  &  \text{ if } u= u(x),\\
0,  & \text{ otherwise}, \\
\end{array}
\right.
\end{eqnarray}
where $x\in \gf(\delta) \cup \{\infty\}$.
By Lemma \ref{lem:min-word-subpl},  $\bc \in \C_{(q, q+1, \delta,1)}$ and $\wt(\bc)=\delta+1$. Thus, $d=\delta +1$.

\end{proof}

The following theorem shows that the dual $\C_{(\delta^m, \delta^m+1,  \delta, 1)}^{\perp}$  of  $\C_{(\delta^m, \delta^m+1,  \delta, 1)}$ is
an almost maximum distance separable code (almost MDS code),  where $\delta\ge 3$ is a prime power.

\begin{theorem}\label{thm:Tr(C-3-5)}
Let $\delta$ be a power of a prime $p$ and $q=\delta^m$ with $\delta\ge 3,  m \ge 2$.
Then the dual $\C_{(q, q+1,  \delta, 1)}^{\perp}$  of the narrow-sense antiprimitive BCH code $\C_{(q, q+1,  \delta, 1)}$  has parameters $[q+1, 2\delta-2,  q-2\delta+3]_q$.
\end{theorem}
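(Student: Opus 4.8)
The plan is to determine the parameters of $\C_{(q, q+1, \delta, 1)}^{\perp}$ by combining the already-established parameters of the primal code from Theorem~\ref{thm:C-3-5} with standard duality facts. First I would record that the dimension follows immediately: since $\C_{(q, q+1, \delta, 1)}$ has dimension $q - 2\delta + 3$ by Theorem~\ref{thm:C-3-5}, its dual has dimension $(q+1) - (q-2\delta+3) = 2\delta - 2$. This accounts for the first two parameters, so the entire substance of the theorem reduces to computing the minimum distance $d^{\perp}$ of the dual and showing $d^{\perp} = q - 2\delta + 3$. Note that an $[q+1, 2\delta-2]$ code has Singleton bound $d^{\perp} \le (q+1) - (2\delta-2) + 1 = q - 2\delta + 4$, so the claim $d^{\perp} = q - 2\delta + 3$ asserts that the dual falls exactly one short of MDS, i.e.\ is almost MDS.

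Next I would obtain an explicit description of the dual code. The generator matrix of $\C_{(q,q+1,\delta,1)}^{\perp}$ is essentially the parity-check matrix $H$ of the primal code displayed in~(\ref{eq:H}), whose columns are indexed by the elements $u \in U_{q+1}$ and whose rows run over the exponents $i$ with $-(\delta-1) \le i \le \delta-1$, $i \ne 0$. Thus a codeword of $\C_{(q,q+1,\delta,1)}^{\perp}$ has the form $\left(\sum_{i} a_i u^i\right)_{u \in U_{q+1}}$ where $i$ ranges over $\{\pm 1, \dots, \pm(\delta-1)\}$; equivalently, the cyclicity-defining set of the dual is $\{\pm 1, \dots, \pm(\delta-1)\} \bmod (q+1)$, a set of size $2\delta - 2$, consistent with the dimension. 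To bound $d^{\perp}$ from below I would argue that a nonzero dual codeword vanishing at $t$ coordinates corresponds to a Laurent polynomial $\sum_i a_i u^i$ supported on exponents in a window of length $2\delta - 1$ that has $t$ roots among the $(q+1)$-th roots of unity; since such a sparse Laurent polynomial can have at most $2\delta - 2$ roots in $U_{q+1}$ (clearing denominators gives a genuine polynomial of degree $\le 2\delta - 2$), the number of zero coordinates is at most $2\delta - 2$, whence $d^{\perp} \ge (q+1) - (2\delta - 2) = q - 2\delta + 3$.

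For the matching upper bound I would exhibit a dual codeword of weight exactly $q - 2\delta + 3$, equivalently a Laurent polynomial of the above shape vanishing at exactly $2\delta - 2$ distinct points of $U_{q+1}$. The natural construction is to pick $2\delta - 2$ prescribed distinct roots $v_1, \dots, v_{2\delta-2} \in U_{q+1}$ and form $f(u) = u^{-(\delta-1)} \prod_{k=1}^{2\delta-2}(u - v_k)$, which lies in the span of $u^{-(\delta-1)}, \dots, u^{\delta-1}$; the remaining task is to verify that the coefficient of $u^0$ vanishes so that $f$ genuinely lies in the dual code (whose defining set excludes $0$) and that $f$ is $\gf(q)$-valued after a suitable scaling, which can be arranged by choosing the $v_k$ to form a $q$-cyclotomic-coset-closed set or by invoking a trace/conjugacy symmetry. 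The main obstacle I anticipate is precisely this coefficient-of-$u^0$ condition: one must choose the $2\delta - 2$ roots so that $\sigma_{\delta-1}(v_1, \dots, v_{2\delta-2}) = 0$ (the constraint killing the $u^0$ term) while keeping all $v_k$ distinct and in $U_{q+1}$ and keeping the codeword over $\gf(q)$. I would handle this by exploiting the sharply $3$-transitive action of $\mathrm{Stab}_{U_{q+1}}$ from Proposition~\ref{prop:Stab-U} and the elementary-symmetric-polynomial machinery of Lemma~\ref{lem:Vandmonde-g}, mirroring the construction of the minimum-weight word in Theorem~\ref{thm:C-3-5}, to produce an admissible root set; alternatively, since $\C_{(q,q+1,\delta,1)}$ is an LCD/self-related BCH code, I would check whether a dimension and weight-distribution duality argument (e.g.\ the relation~(\ref{eq:a-code-dual}) together with the almost-MDS characterization via the dual distance) forces $d^{\perp} = q - 2\delta + 3$ directly, which would bypass the explicit construction entirely.
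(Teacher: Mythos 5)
Your dimension count and your lower bound $d^{\perp}\ge q-2\delta+3$ are correct and coincide with the paper's argument: the paper invokes Delsarte's theorem to write every nonzero dual codeword as $(f(u))_{u\in U_{q+1}}$ with $f(u)=\tr_{q^2/q}\left(\sum_{i=1}^{\delta-1}a_iu^i\right)$, then clears denominators exactly as you do, so that $u^{\delta-1}f(u)$ is a nonzero polynomial of degree at most $2\delta-2$ and $f$ has at most $2\delta-2$ zeros on $U_{q+1}$. (One detail worth making explicit in your version: the coefficients of your Laurent polynomial are not free over $\gf(q^2)$ but must satisfy $a_{-i}=a_i^{q}$ so that the codeword is $\gf(q)$-valued; this does not affect the root count, but it is the reason the defining set $\{\pm1,\dots,\pm(\delta-1)\}$ of size $2\delta-2$ yields a $q$-ary code of that same dimension.)

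The gap is in the upper bound. Your primary plan --- exhibiting a Laurent polynomial with exactly $2\delta-2$ distinct roots in $U_{q+1}$ subject to the vanishing of the $u^{0}$-coefficient and an $\gf(q)$-rationality condition --- is left unresolved: you name the obstacles but do not overcome them, so as written the proposal only establishes $q-2\delta+3\le d^{\perp}\le q-2\delta+4$. The paper avoids the construction entirely with an argument that is essentially your parenthetical ``alternative,'' made precise: by Singleton, $d^{\perp}\le q-2\delta+4$; if equality held, then $\C_{(q,q+1,\delta,1)}^{\perp}$ would be MDS, hence so would $\C_{(q,q+1,\delta,1)}$ (the dual of an MDS code is MDS), forcing its minimum distance to be $(q+1)-(q-2\delta+3)+1=2\delta-1$ rather than the value $\delta+1$ established in Theorem \ref{thm:C-3-5}. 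This is a contradiction precisely because $\delta+1<2\delta-1$ when $\delta\ge 3$, which is where the hypothesis $\delta\ge 3$ enters (for $\delta=2$ the two values coincide and the argument gives nothing). If you replace your explicit-construction plan by this observation, your proof closes with no need to produce a minimum-weight dual codeword.
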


\begin{proof}
By Delsarte's Theorem,   any nonzero codeword of the code $\C_{(q, q+1,  \delta, 1)}^{\perp}$  can be written as $\bc=(f(u))_{u \in U_{q+1}}$,
where $f(u)= \tr_{q^2/q}\left(\sum_{i=1}^{\delta-1} a_i u^i\right)$ and $a_i \in \gf(q^2)$.  Rewrite $f(u)$ in the form
\[f(u)=u^{-(\delta-1)} \left( a_{\delta-1} u^{2(\delta-1)} +
a_{\delta-2} u^{2\delta-3} + \dots +a_{\delta-1}^q  \right).\]
It follows that there are at most $2(\delta-1)$ values of  $u\in U_{q+1}$ such that $f(u)=0$.
That is,  the minimum distance $d^{\perp}$ of $\C_{(q, q+1,  \delta, 1)}^{\perp}$  satisfies $d^{\perp} \ge q-2\delta+3$.
On the other hand,  we have $d^{\perp} \le q-2\delta+4$ by the Singleton bound.  Suppose that $d^{\perp} = q-2\delta+4$.
Then $\C_{(q, q+1,  \delta, 1)}^{\perp}$ is a MDS code and so is $\C_{(q, q+1,  \delta, 1)}$: a contradiction to the minimum distance of
$\C_{(q, q+1,  \delta, 1)}$ given in Theorem \ref{thm:C-3-5}.
This completes the proof.
\end{proof}

Let $\gf(q)^{U_{q+1}}$ denote the vector space consisting  of all elements $(c_u)_{u \in U_{q+1}}$,
where $c_u\in \gf(q)$.  The action of the  semidirect product $\left( \gf(q)^* \right)^{U_{q+1}} \rtimes \mathrm{Stab}_{U_{q+1}}$
on $\gf(q)^{U_{q+1}}$
is defined by
\begin{eqnarray*}
\left((a_u)_{u\in U_{q+1}};  g\right) (c_u)_{u\in U_{q+1}}  = (a_{u} c_{g^{-1}u})_{u \in U_{q+1}}.
\end{eqnarray*}
Thus the multiplication in $\left( \gf(q)^* \right)^{U_{q+1}} \rtimes \mathrm{Stab}_{U_{q+1}}$ is given by
\begin{eqnarray*}
\left((a_u)_{u\in U_{q+1}};  g_1\right) \left((b_u)_{u\in U_{q+1}};  g_2\right) = \left((c_u)_{u\in U_{q+1}};  g_1g_2\right),
\end{eqnarray*}
where $c_u=a_u b_{g_1^{-1}u}$.

Note that $\C_{(q, q+1, \delta, 1)}^{\perp}=\left\{ (f(u))_{u \in U_{q+1}}:  f \in \cP(\delta, q) \right\}$. Then the results on the group representation in Lemma \ref{lem:representation} translate immediately into corresponding results
on the monomial isomorphisms of $ \C_{(q, q+1, \delta, 1)}^{\perp}$.
\begin{theorem}\label{thm:C-dual-group}
Let $\delta$ be a power of a prime $p$ and $q=\delta^m$ with $m \ge 2$.
Define a subgroup of $\left( \gf(q)^* \right)^{U_{q+1}} \rtimes \mathrm{Stab}_{U_{q+1}}$ by
\begin{eqnarray*}
G_{\delta}^{\perp}=\left\{ \left(\left((cu+d)^{(q+1)(\delta-1)}\right)_{u\in U_{q+1}}; \left( \begin{array}{cc}
d^q & c^q\\
c & d
\end{array} \right)^{-1} \right): c,  d \in \gf(q^2),  c^{q+1} \neq d^{q+1}\right\}.
\end{eqnarray*}
Then $G_{\delta}^{\perp}$ is a subgroup of the monomial automorphism group $\mathrm{MAut}(\C_{(q, q+1, \delta, 1)}^{\perp})$.
In particular,  the automorphism group of $\C_{(q, q+1, \delta, 1)}^{\perp}$ is $3$-transitive.
\end{theorem}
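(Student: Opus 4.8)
The plan is to derive the whole statement from the representation of $\mathrm{\overline{Stab}}_{U_{q+1}}$ on $\cP(\delta,q)$ established in Lemma \ref{lem:representation}, exploiting the trace description $\C_{(q,q+1,\delta,1)}^{\perp}=\{(f(u))_{u\in U_{q+1}}:f\in\cP(\delta,q)\}$ recorded just before the statement. The one substantive observation is that the monomial action of an element of $G_{\delta}^{\perp}$ on such a codeword is nothing but the operator $\circ$ of (\ref{eq:action-circ}); after that, code-invariance is immediate and $3$-transitivity is read off from Proposition \ref{prop:Stab-U}.

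First I would fix a typical element and a typical codeword,
$$
\sigma=\left(\left((cu+d)^{(q+1)(\delta-1)}\right)_{u\in U_{q+1}};\, g\right)\in G_{\delta}^{\perp},\qquad g=\left(\begin{array}{cc} d^q & c^q\\ c & d\end{array}\right)^{-1},
$$
and $\bc=(f(u))_{u\in U_{q+1}}$ with $f\in\cP(\delta,q)$. The definition of the monomial action gives
$$
\sigma\,\bc=\left((cu+d)^{(q+1)(\delta-1)}\,f(g^{-1}u)\right)_{u\in U_{q+1}}.
$$
Since $g^{-1}=\left(\begin{array}{cc} d^q & c^q\\ c & d\end{array}\right)$ sends $u\in U_{q+1}$ to $\frac{d^qu+c^q}{cu+d}$, the $u$-th entry is precisely $(g\circ f)(u)$ in the sense of (\ref{eq:action-circ}) taken with $A=g$. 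Because $\mathrm{\overline{Stab}}_{U_{q+1}}$ is a group, being conjugate to $\GL(2,q)$ by (\ref{eq:stabilizers}), we have $g=A\in\mathrm{\overline{Stab}}_{U_{q+1}}$, so Lemma \ref{lem:representation}(1) yields $g\circ f\in\cP(\delta,q)$ and hence $\sigma\,\bc\in\C_{(q,q+1,\delta,1)}^{\perp}$. This shows $G_{\delta}^{\perp}\subseteq\mathrm{MAut}(\C_{(q,q+1,\delta,1)}^{\perp})$. I would then confirm closure under products and inverses using parts (2) and (3) of Lemma \ref{lem:representation}: for two elements with parameters $(c_1,d_1)$ and $(c_2,d_2)$ the multiplier of the product collapses to the standard shape, since $\bigl(c_2(g_1^{-1}u)+d_2\bigr)(c_1u+d_1)$ is again affine in $u$, so $G_{\delta}^{\perp}$ is indeed a subgroup.

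For the last assertion I would pass to the permutation parts. As $(c,d)$ ranges over all pairs with $c^{q+1}\neq d^{q+1}$, the matrices $\left(\begin{array}{cc} d^q & c^q\\ c & d\end{array}\right)$ exhaust $\mathrm{Stab}_{U_{q+1}}$ by Proposition \ref{prop:Stab-U}, and so do their inverses $g$, which are exactly the permutation parts of the elements of $G_{\delta}^{\perp}$. By Proposition \ref{prop:Stab-U} the action of $\mathrm{Stab}_{U_{q+1}}$ on the coordinate set $U_{q+1}$ is equivalent to the sharply $3$-transitive action of $\PGL(2,q)$ on $\PG(1,q)$, so the permutation parts of $G_{\delta}^{\perp}\subseteq\mathrm{Aut}(\C_{(q,q+1,\delta,1)}^{\perp})$ already carry any ordered triple of coordinates to any other; this is exactly the $3$-transitivity of $\mathrm{Aut}(\C_{(q,q+1,\delta,1)}^{\perp})$.

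The hard part will be the matrix bookkeeping in the second step: one must simultaneously respect the convention $(a_u c_{g^{-1}u})$ of the monomial action and the convention $A=\left(\begin{smallmatrix} a & b\\ c & d\end{smallmatrix}\right)^{-1}$ hidden in the definition of $\circ$, and verify that the two inversions cancel so that $g^{-1}u=\frac{d^qu+c^q}{cu+d}$ matches the substitution $f\!\left(\frac{au+b}{cu+d}\right)$ inside $A\circ f$. Once this identification is in place, the result is a direct consequence of Lemma \ref{lem:representation} and Proposition \ref{prop:Stab-U}, with no genuinely new computation.
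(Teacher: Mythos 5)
Your proposal is correct and follows essentially the same route as the paper: the paper likewise derives the theorem by identifying $\C_{(q,q+1,\delta,1)}^{\perp}$ with $\{(f(u))_{u\in U_{q+1}}:f\in\cP(\delta,q)\}$ and invoking the representation of $\mathrm{\overline{Stab}}_{U_{q+1}}$ on $\cP(\delta,q)$ from Lemma \ref{lem:representation} together with the sharp $3$-transitivity in Proposition \ref{prop:Stab-U}. Your explicit check that the monomial action coincides with the operator $\circ$ of (\ref{eq:action-circ}), including the cancellation of the two matrix inversions, is exactly the verification the paper leaves implicit.
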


The following theorem is  an immediate consequence of Theorem \ref{thm:C-dual-group}.
\begin{theorem}\label{thm:C-group}
Let $\delta$ be a power of a prime $p$ and $q=\delta^m$ with $m \ge 2$.
Define a subgroup of $\left( \gf(q)^* \right)^{U_{q+1}} \rtimes \mathrm{Stab}_{U_{q+1}}$ by
\begin{eqnarray*}
G_{\delta}=\left\{ \left(\left((cu+d)^{-(q+1)(\delta-1)}\right)_{u\in U_{q+1}}; \left( \begin{array}{cc}
d^q & c^q\\
c & d
\end{array} \right)^{-1} \right): c,  d \in \gf(q^2),  c^{q+1} \neq d^{q+1}\right\}.
\end{eqnarray*}
Then $G_{\delta}$ is a subgroup of the monomial automorphism group $\mathrm{MAut}(\C_{(q, q+1, \delta, 1)})$.
In particular,  the automorphism group of $\C_{(q, q+1, \delta, 1)}$ is $3$-transitive.
\end{theorem}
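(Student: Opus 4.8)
The plan is to derive this directly from Theorem~\ref{thm:C-dual-group} via the duality relating the monomial automorphism groups of $\C_{(q,q+1,\delta,1)}$ and its dual. The key observation is that a generic element of $G_\delta$ is obtained from the corresponding element of $G_\delta^\perp$ simply by inverting the diagonal scalar part: the tuple $\left((cu+d)^{(q+1)(\delta-1)}\right)_{u\in U_{q+1}}$ is replaced by $\left((cu+d)^{-(q+1)(\delta-1)}\right)_{u\in U_{q+1}}$, while the permutation part $\left(\begin{smallmatrix} d^q & c^q\\ c & d\end{smallmatrix}\right)^{-1}$ is left unchanged.

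First I would record the elementary duality rule for monomial maps. Decompose a monomial transformation with scalar $\ba=(a_u)_{u}$ and permutation $g$ as the composition of the diagonal scaling $\bc\mapsto\ba\cdot\bc$ and the coordinate permutation $\bc\mapsto (c_{g^{-1}u})_u$. The permutation factor preserves the Euclidean inner product, while the scaling factor satisfies $(\ba\cdot\C)^{\perp}=\ba^{-1}\cdot\C^{\perp}$ by~(\ref{eq:a-code-dual}). Combining the two yields, for any linear code $\C$,
\begin{eqnarray*}
\left(\left((a_u)_{u};g\right)\C\right)^{\perp}=\left((a_u^{-1})_{u};g\right)\C^{\perp}.
\end{eqnarray*}
Applying this identity to $\C^{\perp}$ in place of $\C$ and using $\left(\C^{\perp}\right)^{\perp}=\C$, one obtains that $\left((a_u)_{u};g\right)$ stabilises $\C^{\perp}$ exactly when $\left((a_u^{-1})_{u};g\right)$ stabilises $\C$; equivalently, $\left((a_u)_{u};g\right)\in\mathrm{MAut}(\C^{\perp})$ iff $\left((a_u^{-1})_{u};g\right)\in\mathrm{MAut}(\C)$.

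Next I would apply this equivalence to the elements of $G_\delta^\perp$. Since Theorem~\ref{thm:C-dual-group} gives $G_\delta^\perp\subseteq\mathrm{MAut}\!\left(\C_{(q,q+1,\delta,1)}^{\perp}\right)$, inverting the scalar part of each element produces an element of $\mathrm{MAut}\!\left(\C_{(q,q+1,\delta,1)}\right)$, and the resulting set is exactly $G_\delta$. A routine check shows that the scalar-inversion map $\phi:\left((a_u)_{u};g\right)\mapsto\left((a_u^{-1})_{u};g\right)$ respects the semidirect-product multiplication---the product rule $c_u=a_ub_{g_1^{-1}u}$ merely inverts entrywise---so $\phi$ is a group isomorphism and $G_\delta=\phi(G_\delta^\perp)$ is a subgroup of $\mathrm{MAut}\!\left(\C_{(q,q+1,\delta,1)}\right)$ isomorphic to $G_\delta^\perp$.

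Finally, for the $3$-transitivity I would observe that the permutation parts of the elements of $G_\delta$ are precisely the maps induced by $\left(\begin{smallmatrix} d^q & c^q\\ c & d\end{smallmatrix}\right)^{-1}$ with $c^{q+1}\neq d^{q+1}$, i.e.\ the whole of $\mathrm{Stab}_{U_{q+1}}$. By Proposition~\ref{prop:Stab-U} this group acts on the coordinate set $U_{q+1}$ of $\C_{(q,q+1,\delta,1)}$ as $\PGL(2,q)$ acts on $\PG(1,q)$, hence sharply $3$-transitively; thus the permutation parts of $G_\delta\subseteq\mathrm{Aut}\!\left(\C_{(q,q+1,\delta,1)}\right)$ already send any ordered triple of coordinates to any other, and $\mathrm{Aut}\!\left(\C_{(q,q+1,\delta,1)}\right)$ is $3$-transitive. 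The only subtle point is the passage from the purely diagonal duality~(\ref{eq:a-code-dual}) to the full monomial case; once the orthogonal permutation factor is split off, everything reduces to~(\ref{eq:a-code-dual}) and the remainder is bookkeeping.
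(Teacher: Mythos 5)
Your proposal is correct and follows exactly the route the paper intends: the paper justifies Theorem~\ref{thm:C-group} with a single sentence declaring it an immediate consequence of Theorem~\ref{thm:C-dual-group}, and your argument simply fills in the underlying duality bookkeeping --- splitting a monomial map into its permutation part (which preserves the inner product) and its diagonal part (handled by~(\ref{eq:a-code-dual})), so that inverting the scalars carries $G_\delta^{\perp}\subseteq\mathrm{MAut}\bigl(\C_{(q,q+1,\delta,1)}^{\perp}\bigr)$ onto $G_\delta\subseteq\mathrm{MAut}\bigl(\C_{(q,q+1,\delta,1)}\bigr)$, with $3$-transitivity inherited from Proposition~\ref{prop:Stab-U}. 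No gaps; this is the same approach, written out in full.
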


The following theorem presents a coding-theoretic construction of  the Witt spherical geometry designs.
This theorem shows that the supports of the codewords of minimum Hamming weight in the BCH code
$\C_{(q, q+1, \delta, 1)}$
 yield a Witt spherical geometry design.

\begin{theorem}\label{thm:coding-designs}
Let $\delta$ be a power of a prime $p$ and $q=\delta^m$ with $m \ge 2$.
Then the incidence structure $\left ( U_{q+1},  \mathcal B_{\delta+1} \left (\C_{(q, q+1, \delta, 1)}\right ) \right)$
is isomorphic to the
Witt spherical geometry design with parameters $3$-$(\delta^m+1,\delta+1,1)$.
\end{theorem}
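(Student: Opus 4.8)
The plan is to identify the block set $\mathcal{B}_{\delta+1}(\C_{(q,q+1,\delta,1)})$ with the set of \emph{sub-lines} of $U_{q+1}$, i.e.\ with the blocks of the Witt design transported to $U_{q+1}$ through the equivalence of Proposition \ref{prop:Stab-U}. First I would invoke Theorem \ref{thm:C-group}: since $\mathrm{Aut}(\C_{(q,q+1,\delta,1)})$ is $3$-transitive, \cite[Theorem 4.30]{Dingbook18} guarantees that $\mathcal{D}:=\left(U_{q+1},\mathcal{B}_{\delta+1}(\C_{(q,q+1,\delta,1)})\right)$ is a $3$-$(\delta^m+1,\delta+1,\lambda)$ design for some $\lambda\ge 1$; here I use that $d=\delta+1$ by Theorem \ref{thm:C-3-5}, so the minimum-weight supports indeed have size $\delta+1$. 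It then remains to prove $\lambda=1$ and to match $\mathcal{D}$ with the Witt design.

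Next I would show that every sub-line is a block. By Lemma \ref{lem:min-word-subpl} the vector supported on $\{u_c:c\in\gf(\delta)\cup\{\infty\}\}$ with $u_c=\frac{c+u_0^q}{c+u_0}$ is a codeword of weight $\delta+1$, and its support is the image of $\PG(1,\delta)$ under the M\"obius map $x\mapsto\frac{x+u_0^q}{x+u_0}$, hence a sub-line. Because $\mathcal{B}_{\delta+1}(\C_{(q,q+1,\delta,1)})$ is invariant under the permutation parts of $G_\delta$, that is, under $\mathrm{Stab}_{U_{q+1}}\cong\PGL(2,q)$, and since $\PGL(2,q)$ is transitive on the sub-lines (this single orbit being exactly the Witt design, a Steiner system $S(3,\delta+1,\delta^m+1)$), I obtain $\mathcal{W}\subseteq\mathcal{D}$, where $\mathcal{W}$ is the collection of all sub-lines. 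This already gives $\lambda\ge 1$ and reduces the theorem to showing that $\mathcal{D}$ contains no further block.

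The crux is the converse inclusion: every minimum-weight support is a sub-line. Let $S=\{u_1,\dots,u_{\delta+1}\}$ carry a weight-$(\delta+1)$ codeword. By Lemma \ref{lem:sol-rank} the matrix $M_{\delta,\delta+1}$ of (\ref{eq:M}) has rank at most $\delta$, so every $(\delta+1)\times(\delta+1)$ minor vanishes. Selecting $\delta+1$ rows whose exponents form a block $\{-k,\dots,-1,1,\dots,k'\}$ straddling the gap at $0$ (so $k+k'=\delta+1$ with $2\le k,k'\le\delta-1$) and evaluating each minor by the single-deletion Vandermonde identity of Lemma \ref{lem:Vandmonde-g}, I get $\sigma_\ell(u_1,\dots,u_{\delta+1})=0$ for $2\le\ell\le\delta-1$; the absence of rows at exponents $\pm\delta$ is precisely why $\sigma_1$ and $\sigma_\delta$ are \emph{not} forced to vanish, in agreement with Theorem \ref{thm:C-3-5}. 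Hence $\prod_{i=1}^{\delta+1}(u-u_i)=u^{\delta+1}-\sigma_1 u^{\delta}+(-1)^{\delta}\sigma_\delta\,u+(-1)^{\delta+1}\sigma_{\delta+1}$, so $S$ is the zero set in $U_{q+1}$ of a polynomial of the shape $u^{\delta+1}+au^{\delta}+bu+c$. A direct expansion shows that the image of $\PG(1,\delta)$ under any M\"obius transformation over $\gf(q^2)$ is cut out by a polynomial of exactly this shape; conversely such a polynomial is determined up to scalar by any three of its roots, and the unique sub-line through three points of $S$ (the $\lambda=1$ property of $\mathcal{W}$) is cut out by such a polynomial, so it must equal $S$. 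Therefore $S$ is a sub-line, $\mathcal{D}=\mathcal{W}$, $\lambda=1$, and $\mathcal{D}$ is isomorphic to the Witt spherical geometry design $S(3,\delta+1,\delta^m+1)$.

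I expect this final step to be the main obstacle. Bookkeeping the right family of $(\delta+1)\times(\delta+1)$ minors of $M_{\delta,\delta+1}$ so that one annihilates exactly $\sigma_2,\dots,\sigma_{\delta-1}$ and nothing more is delicate, as is the identification of the ``quadrinomial'' zero sets with genuine sub-lines. The latter rests on the claim that the full set of $\delta+1$ roots of $u^{\delta+1}+au^{\delta}+bu+c$ is pinned down by any three of them, which in turn requires verifying that the relevant $3\times 4$ evaluation matrix at distinct $u_i$ has rank $3$; matching this against the Steiner property of the Witt design is what closes the argument.
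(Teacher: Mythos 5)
Your proposal follows essentially the same route as the paper: both establish the $3$-design property from Theorem \ref{thm:C-group}, use Lemmas \ref{lem:sol-rank} and \ref{lem:Vandmonde-g} together with Vieta's formulas to show each weight-$(\delta+1)$ support is the root set of a quadrinomial $u^{\delta+1}+au^{\delta}+bu+c$, deduce $\lambda=1$ from the fact that three points determine $(a,b,c)$, and use Lemma \ref{lem:min-word-subpl} plus the orbit of $g_0\,\PG(1,\delta)$ under $\mathrm{Stab}_{U_{q+1}}$ to identify the block set with the Witt design. The only substantive difference is that the paper sidesteps the issue you flag (the rank of the $3\times 3$ evaluation matrix at an \emph{arbitrary} triple) by exploiting the already-established design property to compute $\lambda$ at the single triple $\{1,u_0,u_0^2\}$ with $u_0$ a generator of $U_{q+1}$, where the relevant determinant factors as $(1-u_0)\,u_0\,(1-u_0^{\delta-1})(1-u_0^{\delta})\neq 0$.
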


\begin{proof}
By Theorems \ref{thm:C-3-5} and  \ref{thm:C-group},  $\left ( U_{q+1},  \mathcal B_{\delta+1} \left (\C_{(q, q+1, \delta, 1)}\right ) \right)$
is a $3$-$(\delta^m+1, \delta+1, \lambda)$,  where $\lambda$ is a positive integer.
Let $u_0$ be a fixed generator element of  $U_{q+1}$.  Then we have
$\lambda= \left| \cB_{u_0}  \right|$,
where $\cB_{u_0}=\left\lbrace  B \in \mathcal B_{\delta+1} \left (\C_{(q, q+1, \delta, 1)}\right ) :  \{1, u_0, u_0^2\}\subseteq B\right\rbrace$.
Let $ B =\left\{ u_1,  \dots,  u_{\delta+1}\right\} \in \cB_{u_0}$ with $u_1=1,  u_2= u_0,  u_3= u_0^2$.  Recall that
\begin{align*}
\C_{(q, q+1, \delta,1)}=\{\bc \in \gf(q)^{q+1}: \bc H^T=\bzero\},
\end{align*}
where $H$ is given by (\ref{eq:H}).
By Lemma \ref{lem:sol-rank},  the rank of the matrix $M_{\delta,  \delta+1}$ defined in (\ref{eq:M}) is less than $\delta+1$.
So the determinant of the square matrix $M_{\delta,  \delta+1}[-(\delta-1)+i,  2+i]$ equals zero,  where $0 \le i \le \delta-3$.
By Lemma \ref{lem:Vandmonde-g},  we have
\begin{eqnarray*}
\sigma_{\ell}(u_1,  \dots, u_{\delta+1})=0,
\end{eqnarray*}
where $2\le \ell \le \delta -1$.
By Vieta's formula,  we obtain
\[\prod_{i=1}^{\delta+1} (u-u_i)=u^{\delta+1}+a u^{\delta} +b u+c,\]
where $u$ is an indeterminate and $(a,b,c)\in \gf(q^2)^3$.
Substituting $u$ in both sides of the above equation by $u_i$ ($1 \le i  \le 3$),  we get
\begin{eqnarray*}
\left\{ \begin{array}{rrrl}
a+&  b+ & c& =-1\\
u_0^{\delta}a+&  u_0b+ & c& =-u_0^{\delta+1}\\
u_0^{2\delta}a+&  u_0^2b+ & c& =-u_0^{2(\delta+1)}\\
\end{array} \right. .
\end{eqnarray*}
Note that the coefficient matrix of the above system of  equations is nonsingular.
Thus the system  has a unique solution of $(a,  b, c)$.
That is,  $(a,  b, c)$ is uniquely determined by $u_0$.
In particular,  $B$ must be equal to $\left \{u \in U_{q+1}:  u^{\delta+1}+a u^{\delta} +b u+c=0\right\}$.
To conclude,  $\lambda$ is equal to $1$.

Let $g_0=\left( \begin{array}{cc}
u_0 & 1\\
1 & u_0
\end{array} \right)$.  By Lemma \ref{lem:min-word-subpl},  $\frac{1}{u_0} g_0 \PG(1, \delta) \in \mathcal B_{\delta+1} \left (\C_{(q, q+1, \delta, 1)}\right )$.
It follows that $g_0 \PG(1, \delta) \in \mathcal B_{\delta+1} \left (\C_{(q, q+1, \delta, 1)}\right )$ from Theorem \ref{thm:C-group}.
Since $\lambda=1$,  by (\ref{eq:stabilizers}),  we conclude that
\begin{eqnarray*}
\mathcal B_{\delta+1} \left (\C_{(q, q+1, \delta, 1)}\right )&=&\mathrm{Stab}_{U_{q+1}}\left(  g_0 \PG(1, \delta) \right) \\
&=&\left \{ g_0g  \PG(1, \delta):  g \in \PGL(2,q)\right\}\\
&=& g_0 \mathrm{Orb}_{\PG(1, \delta)} ,
\end{eqnarray*}
where $\mathrm{Orb}_{\PG(1, \delta)}= \left \{ g  \PG(1, \delta):  g \in \PGL(2,q)\right\}$.
Therefore,  the map from $\PGL(2,q)$ to $U_{q+1}$ given by $x \mapsto g_0 x$ gives rise to
an isomorphism  between the two incidence structures $\left(\PGL(2,q),  \mathrm{Orb}_{\PG(1, \delta)} \right)$ and $\left ( U_{q+1},  \mathcal B_{\delta+1} \left (\C_{(q, q+1, \delta, 1)}\right ) \right)$.  The desired conclusion then follows from the definition of the
Witt spherical geometry design.
\end{proof}

Combining Theorems  \ref{thm:C-group} and  \ref{thm:coding-designs} with Theorem  \ref{thm;2-rank} yields the following result.
\begin{theorem}\label{thm:Witt-rank}.
Let $\delta$ be a power of a prime $p$ and $m$ an integer with $m \ge 2$.
Then the $p$-rank of   the
Witt spherical geometry design with parameters $S(3,  \delta+1,  \delta^m+1)$
is $\delta^m+1$.
\end{theorem}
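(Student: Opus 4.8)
The plan is to recognise the Witt spherical geometry design $S(3,\delta+1,\delta^m+1)$ as one of the $\PGL$-invariant incidence structures already analysed in Theorem~\ref{thm;2-rank}, and then to read off its $p$-rank from the dichotomy established there. Because the $p$-rank is the rank of the point-block incidence matrix over $\gf(p)$, it is an isomorphism invariant of the design; I am therefore free to compute it in whichever realisation is most convenient.

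First I would fix such a realisation. Set $q=\delta^m$. By Theorem~\ref{thm:coding-designs}, the incidence structure $\left(U_{q+1},\mathcal B_{\delta+1}(\C_{(q,q+1,\delta,1)})\right)$ is isomorphic to $S(3,\delta+1,\delta^m+1)$, and by Theorem~\ref{thm:C-group} its automorphism group contains $\mathrm{Stab}_{U_{q+1}}$ acting $3$-transitively on the $\delta^m+1$ points; in particular the block set $\mathcal B_{\delta+1}(\C_{(q,q+1,\delta,1)})$ is $\mathrm{Stab}_{U_{q+1}}$-invariant. By Proposition~\ref{prop:Stab-U} this action is equivalent to that of $\PGL(2,\delta^m)$ on $\PG(1,\delta^m)$, so transporting the design to $\PG(1,\delta^m)$ turns its block set into a $\PGL(2,\delta^m)$-invariant family of $(\delta+1)$-subsets of $\PG(1,\delta^m)$.

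Next I would match the parameters to those of Theorem~\ref{thm;2-rank}. Writing $\delta=p^s$ with $s\ge 1$, we have $\delta^m=p^{sm}$, so the point set is $\PG(1,p^{sm})$ and the block size is $k=\delta+1$. The hypothesis $1\le k\le p^{sm}$ holds because $m\ge 2$ forces $\delta^m\ge\delta^2>\delta+1$ for every prime power $\delta\ge 2$. Theorem~\ref{thm;2-rank}, applied with its integer parameter equal to $sm$, then yields that the $p$-rank equals $p^{sm}$ or $p^{sm}+1$ according as $k$ is or is not divisible by $p$.

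The decisive---and essentially only---computation is the residue of the block size modulo $p$. Since $\delta=p^s$ with $s\ge 1$, we have $\delta\equiv 0\pmod p$ and hence $k=\delta+1\equiv 1\pmod p$, so $p\nmid(\delta+1)$. The second alternative of Theorem~\ref{thm;2-rank} therefore applies, and the $p$-rank of $S(3,\delta+1,\delta^m+1)$ equals $p^{sm}+1=\delta^m+1$, as claimed. I do not expect any genuine obstacle: the whole weight of the argument is carried by Theorems~\ref{thm:coding-designs}, \ref{thm:C-group} and \ref{thm;2-rank}, and the present proof only specialises them, the single point needing verification being the non-divisibility $p\nmid(\delta+1)$.
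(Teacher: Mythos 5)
Your proposal is correct and follows exactly the route the paper intends: its one-line proof is ``Combining Theorems \ref{thm:C-group} and \ref{thm:coding-designs} with Theorem \ref{thm;2-rank} yields the result,'' and you have simply filled in the details, including the one computation the paper leaves implicit, namely that $k=\delta+1\equiv 1\pmod p$ so the second alternative of Theorem \ref{thm;2-rank} applies and the $p$-rank is $\delta^m+1$.
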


\begin{example}
Let $q=25$.  Then the narrow-sense BCH code $\C_{(q,  q+1,  5,  1 )}$ has parameters $[26,18,6]_{25}$
and weight enumerator
\begin{eqnarray*}
1+3120z^6+1053000 z^8\\
+52478400 z^9+ 2246164440 z^{10}\\
+ 76730209920 z^{11} + 2313008100000 z^{12} \\
+ 59737548888000 z^{13}+ 1331420089708800 z^{14}\\
 + 25563001945153920 z^{15} + 421789956437369520 z^{16}\\
 + 5954681202248610000  z^{17}+ 71456174963080050000 z^{18} \\
 + 722083451831987107200 z^{19} +6065500995657406236960 z^{20}\\
  + 41592006827232472278720 z^{21}+ 226865491784954611290000 z^{22}\\
  + 946916835276318186384000 z^{23}+2840750505828957328830600 z^{24} \\
  + 5454240971191597731710304 z^{25} + 5034683973407628695013720 z^{26},
\end{eqnarray*}
and
$\left ( U_{q+1},  \mathcal B_{6} \left (\C_{(q, q+1, \delta, 1)}\right ) \right)$
is a $3$-$(26,  6,  1)$ design.

The dual code  $ \C_{(q,  q+1,  5,  1 )}^{\perp}$ has parameters $[26,8,18]_{25}$
and weight enumerator
\begin{eqnarray*}
1+ 1645800 z^{18}+ 4180800 z^{19}+ 70265520 z^{20}+ 426192000 z^{21}\\
+ 2393352000 z^{22}+ 9911491200 z^{23} + 29801335200 z^{24}\\
+ 57185869104 z^{25}+ 52793559000 z^{26},
\end{eqnarray*}
and
$\left ( U_{q+1},  \mathcal B_{18} \left (\C_{(q, q+1, \delta, 1)}^{\perp}\right ) \right)$ is a $3$-$(26,  18,21522)$ design.
\end{example}

\section{Summary and concluding remarks}\label{sec:concl}

We have provided a detailed discussion of the interplay among the narrow-sense antiprimitive BCH codes,
group actions and group representations concerning $\PGL(2,q)$,  and combinatorial $3$-designs in this paper.
The main contributions of this paper are the following:
\begin{itemize}
\item Infinite families of narrow-sense antiprimitive BCH codes admitting a $3$-transitive automorphism group were
proposed in Theorem \ref{thm:C-group}.  The dimensions and the minimum distances of these codes and their duals were also determined
in Theorems \ref{thm:C-3-5} and \ref{thm:Tr(C-3-5)}.  Using Delsarte's Theorem, it is shown in Theorem \ref{thm:Tr(C-3-5)} that the dual codes of the narrow-sense antiprimitive BCH codes derived in this paper are almost MDS.

\item A coding-theory construction of the Witt spherical geometry design $S(3,  \delta+1,  \delta^m+1)$ was presented in Theorem \ref{thm:coding-designs}.

\item A complete classification of $\PGL(2,  p^m)$-invariant $p^h$-ary linear codes was
established 
in Theorem \ref{thm:code-PGL-4}.

\item The $p$-ranks of incidence structures that are invariant under the action of $\PGL(2, p^m)$ were derived
in Theorem \ref{thm;2-rank}.  In particular,  the $p$-rank of the Witt spherical geometry design $S(3,  \delta+1,  \delta^m+1)$
was determined in Theorem \ref{thm:Witt-rank}.
\end{itemize}
The results of this paper generalize and extend the work in \cite{DTT21}.
It would be interesting to determine structures and parameters for more $3$-designs held in
$\C_{(\delta^m, \delta^m+1, \delta, 1)} $ and $ \C_{(\delta^m, \delta+1, \delta, 1)}^{\perp}$,
where $\delta$ is a prime power.  It would be valuable to determine the full automorphism groups of the BCH codes introduced in this paper.

Finally, we would explain an important motivation for constructing a linear code over a finite field to support a known $t$-design constructed with an algebraic, combinatoric, or group-theoretic approach. Such an investigation may not be interesting in combinatorics, as the known $t$-design was already discovered earlier. However, this would be very interesting in coding theory and enhances coding theory, as the newly discovered linear code supporting the known $t$-design should have special properties compared with general linear codes.
It is known that the dual code of such a code admits majority-logic decoding \citep{RB75,Rud67,Tonch98}.
By definition, $t$-designs have a certain level of symmetry. The larger the strength $t$ of a $t$-design is, the higher the level of symmetry the $t$-design has. For instance, all the known linear codes supporting a
$4$-design or a $5$-design have special properties \cite{Dingbook18,TD20}.
While the Witt spherical geometry design $S(3, \delta+1, \delta^m+1)$ was discovered 80 years ago, the effort of constructing a linear code supporting this design carried out in this paper has led to the discovery of the codes $\C_{(\delta^m, \delta^m+1, \delta, 1)}$
which are $\PGL(2, \delta^m)$-invariant and a complete classification of $\PGL(2, p^m)$-invariant $p^h$-ary linear codes.
In addition, an infinite family of almost MDS codes $\C_{(\delta^m, \delta^m+1, \delta, 1)}^\perp$ were obtained in this project.
The results obtained in this paper enhance coding theory and strengthen the interplay between coding theory and design theory.
In addition to the Witt spherical geometry design $S(3, \delta+1, \delta^m+1)$, more $3$-designs with new parameters are supported by the codewords of other weights in $\C_{(\delta^m, \delta^m+1, \delta, 1)}$. Hence,
this paper does have contributions to the theory of combinatorial designs.
Furthermore, the codes presented in this paper also have applications in cryptography (secret sharing \cite{YD06} and authentication codes \cite{DHKW}).





\end{document}